\newtheorem{thm}{Theorem}[section]
\newtheorem{lemma}[thm]{Lemma}
\crefname{lemma}{Lemma}{Lemmas}
\Crefname{lemma}{Lemma}{Lemmas}
\newtheorem{proposition}[thm]{Proposition}
\newtheorem{remark}[thm]{Remark}
\newcommand{\ds}{\displaystyle}
\newcommand{\Z}{\mathbb{Z}}
\numberwithin{equation}{section}
\newcommand{\W}{\textit{Wolbachia}}
\numberwithin{equation}{section}
\DeclareMathAlphabet{\pazocal}{OMS}{zplm}{m}{n}
\title{Travelling Waves in {\it Wolbachia} Spread Dynamics}
\author{Zhuolin Qu$^1$, Tong Wu$^1$, Eddy Kwessi$^2$\thanks{Corresponding author: ekwessi@trinity.edu}, \\
\small $^{1}$Department of Mathematics, University of Texas at San Antonio, San Antonio, TX, USA\\
\small $^{2}$Department of Mathematics,
Trinity University, San Antonio, TX, USA}
\date{}	
\begin{document}

\maketitle
\begin{abstract}
\W, a maternally transmitted endosymbiont, offers a powerful biological control strategy for mosquito-borne diseases such as dengue, Zika, and malaria. We develop an integro-difference equation (IDE) model that integrates \W's  nonlinear growth with spatially explicit mosquito dispersal kernels to study invasion dynamics in heterogeneous landscapes. Analytical results establish existence and uniqueness of monotone traveling waves and provide explicit estimates of invasion speeds as functions of dispersal and growth parameters. Four kernels: Gaussian, Laplace, exponential square-root, and Cauchy-represent a continuum from short- to long-range movement. Fat-tailed kernels generate faster, broader wavefronts, while compact ones limit spread. We also identify a critical bubble, the minimal localized profile required for sustained invasion. Numerical simulations in one- and two-dimensional domains confirm theoretical predictions and reveal parameter regimes governing invasion success. This framework quantifies how dispersal mechanisms shape \W's spread thus   informing targeted and efficient vector-control strategies.
\end{abstract}
{\bf Keywords}: Travelling Waves, \W , Kernels, IDE, Spread Dynamics
\section{Introduction}
{\it Wolbachia}, a genus of maternally transmitted endosymbiotic bacteria, has received considerable attention in recent years due to its ability to manipulate host reproduction and suppress the transmission of vector-borne diseases such as dengue, Zika, and malaria.
When introduced into mosquito populations, \W~ can spread via cytoplasmic incompatibility (CI), a reproductive mechanism that gives infected females a selective advantage, leading to population replacement. This biological control strategy has been successfully deployed in field trials and holds promise for large-scale vector suppression programs.

Mathematical modeling has played a key role in understanding the population dynamics of \W. Classical models, including those by \cite{caspari1959} and \cite{turelli2014}, have provided insight into the conditions for invasion and persistence based on maternal inheritance and CI. However, many of these models often neglect spatial structure and assume homogeneous mixing, overlooking the critical role of dispersal and landscape heterogeneity in real-world settings.

Spatial dynamics are particularly important for assessing the success and stability of \W~ invasions. Environmental variability, habitat fragmentation, and localized releases all influence the spread of infection. Reaction-diffusion equations, as pioneered by \cite{fisher1937}, have been employed to study traveling wave phenomena in genetics and ecology. However, these models lack the specificity needed to capture {\it Wolbachia}-specific dynamics, such as fitness trade-offs and competition between infected and uninfected populations. Recent studies (e.g., \cite{barton2011,qu2022modeling}) have highlighted the importance of including spatially explicit dynamics to predict {\it Wolbachia} invasion under real-world conditions. 

A central object of study in such models is the traveling wave solution, which describes how an initially localized \textit{Wolbachia}-infected population invades and establishes itself in an uninfected population. The wavefront marks the transition zone between infected and uninfected regions, and its dynamics—such as shape and speed—are crucial for understanding how quickly and 
effectively \textit{Wolbachia} spreads~\cite{weinberger2002long, lewis2002spread}. In bistable systems, traveling waves connect two stable equilibria (typically the uninfected and fully infected states) and exhibit a sharp invasion threshold 
determined by the initial spatial distribution.

In this work, we analyze an integro-difference equation (IDE) model to describe the spatial spread of \W~ in mosquito populations. The model adopts a biologically grounded yet simplified \W~ growth function with a broad class of dispersal kernels. To ensure mathematical tractability, we model only the infection frequency using a scalar equation, rather than capturing the full mosquito life cycle. This reduced framework retains the essential bistable dynamics of \W, driven by CI, fitness costs, and maternal transmission. Our primary focus is on the spatial component: by incorporating a range of dispersal kernels, including Gaussian, Laplace, exponential square-root, and Cauchy, we explore how different movement behaviors shape traveling wave propagation and invasion thresholds. This approach enables rigorous analysis of wave existence, speed, and the critical bubble profile that separates successful invasion from failure.

The remainder of this paper is organized as follows: 
In \cref{sec:model}, we formulate the \W~ integro-difference equation (WIDE) model, including both the nonlinear growth function and a range of dispersal kernels, and conduct the fixed-point analysis. \Cref{TraWaves} presents theoretical results on traveling waves and the threshold condition for \W~ invasion. In \cref{sims}, we perform numerical simulations in both one-dimensional (1-D) and two-dimensional (2-D) radial geometries, comparing the effects of different kernels on wave speed and thresholds. \Cref{sec:discuss} concludes with a discussion of the biological implications and future directions for modeling \W-based vector control.

\section{Modelization} \label{sec:model}
We consider a discrete-time spatial model that captures the generation-to-generation dynamics of \W~ infection in mosquito populations. The model combines a nonlinear growth function representing CI and maternal inheritance with an integro-difference formulation that incorporates mosquito dispersal across space.

We begin by introducing the biological and mathematical structure of the \W~ growth model in a non-spatial setting. This forms the foundation for the spatial model presented in \cref{sec:space}, where we couple the growth dynamics with biologically motivated dispersal kernels. The descriptions of state variables, parameters, along with their baseline values, are provided in \cref{tab:baseline}.

\subsection{\textit{Wolbachia} Growth Model}
To establish the within-generation dynamics of 
\W, we consider a mixed mosquito population consisting of infected females $I_F(t)$, infected males $I_M(t)$,  uninfected females $U_F(t)$, and uninfected males $U_M(t)$ at generation $t$. Let $V_t\in[0, 1]$ denote the \W~ infection frequency, defined as the proportion of infected individuals in the population:
$$
\ds V_t=\frac{I_F(t)}{I_F(t)+U_F(t)}=\frac{I_M(t)}{I_M(t)+U_M(t)}\;.
$$
Let $s_f$ represent the relative fitness cost of infected females, and $1-s_f$ is the fitness of infected mosquitoes.
Let $s_h$ represent the CI intensity and $\mu$ be the maternal transmission leakage rate among all offspring of infected females. That is, $(1-\mu)\%$ of them are infected and $\mu\%$ of them are uninfected.
Given these assumptions, the infection frequency at the next generation, $V_{t+1}$, is derived as follows (see \cite{Yu2019}):
\begin{equation} \label{eq:growth}
V_{t+1}=f(V_t)=\frac{(1-\mu)(1-s_f)V_t}{s_hV_t^2-(s_h+s_f )V_t+1}\;, \quad  t\in \Z_+\;,
\end{equation}
This nonlinear model characterizes the {\it Wolbachia} generational spread dynamics in a well-mixed mosquito population. For $\mu=0$, the function exhibits bistability with three fixed points: $V_0^*=0$ (stable equilibrium, extinction), $V_1^*=s_f/s_h$ (unstable equilibrium), and $V_2^*=1$ (stable equilibrium, complete infection). For general $0\le\mu\le 1$, the fixed points are $V_0^*=0$,
\begin{equation} \label{eq:theta}  
V_1^*=\frac{ s_f + s_h - \sqrt{ s_f^2 - 4\mu s_h - 2 s_f s_h + 4\mu s_f s_h + s_h^2 } }{ 2 s_h }~~\text{(unstable)},
\end{equation}
and
\begin{equation}
\label{eq:V_mu}
V_2^*=\frac{ s_f + s_h + \sqrt{ s_f^2 - 4\mu s_h - 2 s_f s_h + 4\mu s_f s_h + s_h^2 } }{ 2 s_h } ~~\text{(stable)},
\end{equation}
where $0<V_1^*<V_2^*<1$. This bistability is a hallmark of the Allee effect (\cite{allee1949}), in which the survival or reproduction becomes inefficient when individuals are scarce, leading to a critical threshold (in this case $V_1^*$), below which populations decline. In \W~ dynamics, this manifests as an unstable equilibrium separating extinction from full infection: below the threshold, the infection fails to spread, while above it, invasion becomes self-sustaining.

\subsection{Spatial Model: Integro-Difference Formulation} \label{sec:space}
To incorporate spatial spread into the \W~ dynamics, we extend the scalar growth model into a spatially explicit setting using an IDE framework. Let $ V_t(x) $ denote the infection frequency at location  $x\in \mathbb{R}$ at the start of generation $t$. Following the classical IDEs (\cite{Kot1996}), the population undergoes two stages in each generation: (1) Local growth: at each location $x$, the infection frequency is updated by the local nonlinear growth function $ f(V_t(x)) $, as defined in \cref{eq:growth}, capturing the within-site dynamics due to CI and maternal transmission. (2) Dispersal: the infected population disperses across space according to a dispersal kernel $K(x,y)$, representing the probability density of movement from location $y$ to $x$. Assuming that the dispersal occurs after reproduction and that the dispersal kernel depends only on relative distance, that is $K(x,y) = K(x-y)$, we obtain the {\it Wolbachia}-integro-difference equation (WIDE):
\begin{equation}\label{eq:model}
V_{t+1}(x) = \int_{\mathbb R} K(x-y) f(x,V_t(y)) \mathrm{d}y.
\end{equation}
As shown by \cite{Kot1996}, when $f(V)\leq f(0)V_t$, integro-difference equations with such kernels may admit traveling-wave solutions. The shape of the dispersal kernel $K(x)$ influences both wave speed and spread effectiveness.  
In particular, we consider four biologically motivated kernel types: Gaussian kernel (short-range dispersal with rapid decay), Laplace kernel (exponential decay and allows moderate-range movement), Exponential square-root kernel (intermediate decay), and Cauchy kernel (fat-tailed with long-range dispersal). These kernels capture a range of potential mosquito movement behaviors and ecological scenarios. For example, in urban landscapes with barriers (e.g., buildings, roads), mosquito dispersal may be approximated by a Gaussian kernel, while open environments (e.g., coastal or agricultural regions) are better captured by fat-tailed distributions like the Cauchy kernel.
\begin{table}[H]
\centering \resizebox{1.15\textwidth}{!}{\begin{minipage}{1.3\textwidth}
\begin{tabular}{lll}
\toprule
& Description & Baseline\\  \midrule 
$V_t(x) $& Infection frequency at $t-$generation & --\\   
$f(V)$ & \W\, growth function & -- \\   
$K(x,y)$ & Spatial kernel & -- \\  \midrule 
$s_f$ & Relative fitness cost of infected females & 0.24 \\ 
$s_h$ & Cytoplasmic incompatibility intensity &  1\\ 
$\mu$ & Maternal transmission leakage & $0$ \\  \midrule  
$V_0^*$ & \textit{Wolbachia}-free fixed point & 0 \\  
$V_1^*$ & Unstable intermediate fixed point (Allele threshold)  &  $s_f/s_h = 0.24 $ \\  
$V_2^*$ & Stable high-infection fixed point & 1 \\  
$V_c^*(x)$ & ``Critical Bubble'': Nontrivial \W~threshold & -- \\  
\bottomrule
\end{tabular}
\caption{Model variables, parameters, and baseline values based on the wMel strain of \W~ in \textit{Aedes aegypti} mosquitoes \cite{walker2011wmel,hoffmann2014stability}. The dispersal kernel $K$ is given in \cref{tab:kernels}. }
\label{tab:baseline}
\end{minipage}}
\end{table}

\subsection{Stability of Fixed Points}
Before analyzing spatial propagation, it is useful to verify that the 
spatially homogeneous equilibria of the WIDE~\cref{eq:model} 
inherit the same stability characteristics as the fixed points of the local 
growth map~\cref{eq:growth}. 
This step ensures that spatial coupling through the dispersal kernel does not 
alter the bistable structure of the system, 
which underlies the formation of monotone traveling waves connecting the two 
stable equilibria.

\begin{thm}
Let \( V^* \) be a fixed point of the \textit{Wolbachia} growth function \( f(V) \).
Then:
\begin{enumerate}
\item[(i)] \( V^* \) is also a spatially homogeneous equilibrium of the WIDE.
\item[(ii)] Linearization of the WIDE around \( V^* \) yields the dispersion relation
\[
e^{\lambda(k)} = f'(V^*)\,\widehat{K}(k),
\]
where \( \widehat{K}(k) \) is the Fourier transform of the kernel \( K \).
Consequently:
\begin{itemize}
\item If \( |f'(V^*)| < 1 \), the equilibrium is {locally asymptotically stable (LAS)}.
\item If \( |f'(V^*)| > 1 \), the equilibrium is {unstable (UNS)}.
\item If \( |f'(V^*)| = 1 \), linearization is inconclusive.
\end{itemize}
\end{enumerate}
In particular, for the bistable growth function~\cref{eq:growth}, $V_0^* = 0 \;\text{(LAS, extinction)}$, $V_1^*$ (UNS, Allee threshold) and $V_2^*$ (LAS, complete infection) are given in ~\cref{eq:theta} and ~\cref{eq:V_mu}.
\end{thm}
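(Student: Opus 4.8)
I would prove (i) and (ii) in turn and then specialize to the growth map~\cref{eq:growth}.

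\emph{Part (i).} Substitute the spatially constant profile $V_t(x)\equiv V^*$ into the WIDE~\cref{eq:model}. Since each dispersal kernel under consideration is a probability density, $\int_{\R}K(x-y)\,\mathrm{d}y=1$, so convolution with $K$ fixes constants; together with $f(V^*)=V^*$ this gives $V_{t+1}(x)=\int_{\R}K(x-y)f(V^*)\,\mathrm{d}y=f(V^*)=V^*$, i.e.\ the homogeneous profile is invariant and $V^*$ is a spatially homogeneous equilibrium.

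\emph{Part (ii): the dispersion relation.} Write $V_t(x)=V^*+\varepsilon_t(x)$ with $\varepsilon_t$ small, substitute into~\cref{eq:model}, and Taylor-expand $f$ about $V^*$: $f(V^*+\varepsilon_t(y))=V^*+f'(V^*)\varepsilon_t(y)+O(\varepsilon_t(y)^2)$. Discarding the quadratic remainder and using $\int_{\R}K=1$ leaves the linear recursion $\varepsilon_{t+1}=f'(V^*)\,(K*\varepsilon_t)$. Taking the spatial Fourier transform, equivalently inserting a normal mode $\varepsilon_t(x)=e^{\lambda(k)t}e^{ikx}$, converts the convolution into multiplication by $\widehat K(k)$ and yields precisely $e^{\lambda(k)}=f'(V^*)\widehat K(k)$. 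The per-generation growth factor of mode $k$ is $|e^{\lambda(k)}|=|f'(V^*)|\,|\widehat K(k)|$, and since $K\ge 0$ has unit mass, $|\widehat K(k)|=\bigl|\int_{\R}K(y)e^{-iky}\,\mathrm{d}y\bigr|\le 1$ for every $k$, with equality at $k=0$. Hence $\sup_k|e^{\lambda(k)}|=|f'(V^*)|$, attained by the homogeneous mode, and the three listed cases follow: if $|f'(V^*)|<1$ every mode decays geometrically (LAS); if $|f'(V^*)|>1$ the $k=0$ mode grows (UNS); if $|f'(V^*)|=1$ a neutral mode remains, so linearization is inconclusive.

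\emph{Making the (in)stability rigorous.} To upgrade the formal computation I would note that $\varepsilon\mapsto f'(V^*)\,K*\varepsilon$ is a bounded convolution operator on $C_b(\R)$ with operator norm exactly $|f'(V^*)|$. When $|f'(V^*)|<1$, the full nonlinear update $\Phi(\varepsilon)(x)=\int_{\R}K(x-y)\bigl(f(V^*+\varepsilon(y))-V^*\bigr)\mathrm{d}y$ obeys $\|\Phi(\varepsilon)\|_\infty\le\bigl(\sup_{|\xi|\le\|\varepsilon\|_\infty}|f'(V^*+\xi)|\bigr)\|\varepsilon\|_\infty$, which by continuity of $f'$ is a strict contraction for $\|\varepsilon\|_\infty$ small, so $\|\varepsilon_t\|_\infty\to 0$ geometrically. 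When $|f'(V^*)|>1$, restricting~\cref{eq:model} to spatially constant data reduces it exactly to the scalar iteration $c_{t+1}=f(V^*+c_t)-V^*$ whose linearization at $0$ has multiplier $f'(V^*)$, so $V^*$ is already unstable among homogeneous perturbations, and no Fourier mode can do better since $|\widehat K|\le 1$. I expect this passage from a formal dispersion relation to genuine (in)stability on a function space to be the only delicate point; everything else is routine.

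\emph{Specialization to the three equilibria.} The quotient rule gives $f'(V)=(1-\mu)(1-s_f)\,\dfrac{1-s_hV^2}{\bigl(s_hV^2-(s_h+s_f)V+1\bigr)^2}$, so $f'(0)=(1-\mu)(1-s_f)<1$ (equal to $0.76$ at the baseline), whence $V_0^*=0$ is LAS. At a nonzero fixed point the equation $f(V)=V$ forces $s_hV^2-(s_h+s_f)V+1=(1-\mu)(1-s_f)$, hence $f'(V_i^*)=\dfrac{1-s_h(V_i^*)^2}{(1-\mu)(1-s_f)}$ for the two nonzero fixed points $V_1^*,V_2^*$ of~\cref{eq:theta} and~\cref{eq:V_mu}. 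These are the roots of $s_hV^2-(s_h+s_f)V+1-(1-\mu)(1-s_f)=0$, so Vieta's product formula gives $s_hV_1^*V_2^*=1-(1-\mu)(1-s_f)$; combining this with $0<V_1^*<V_2^*\le 1$ and $s_h\le 1$ yields $f'(V_1^*)>1$ (UNS) and $0<f'(V_2^*)<1$ (LAS). Equivalently, and more geometrically, $f$ is increasing on $[0,1]$ and its graph crosses the diagonal transversally from below at $V_1^*$ (forcing $f'(V_1^*)>1$) and from above at $V_2^*$ (forcing $f'(V_2^*)<1$), with the degenerate case $\mu=0,\ V_2^*=1$ handled directly via $f'(1)=(1-s_h)/(1-s_f)<1$. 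Either route recovers the stated classification: $V_0^*$ LAS, $V_1^*$ UNS, $V_2^*$ LAS.
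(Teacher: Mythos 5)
Your proposal is correct, and for the heart of the theorem it follows the same route as the paper: part (i) is the identical normalization argument, and part (ii) is the same linearization plus Fourier ansatz leading to \(e^{\lambda(k)}=f'(V^*)\widehat K(k)\), with the observation \(|\widehat K(k)|\le 1\) (equality at \(k=0\)) showing that dispersal cannot alter the stability type dictated by \(f'(V^*)\). Where you go beyond the paper is in two places, both of which are genuine improvements rather than detours. First, you upgrade the formal mode analysis to actual (in)stability on a function space: the contraction estimate on \(C_b(\mathbb{R})\) for \(|f'(V^*)|<1\), and the restriction to spatially constant perturbations for \(|f'(V^*)|>1\); the paper stops at the dispersion relation and the heuristic remark about \(|\widehat K|\le 1\). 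Second, you actually verify the classification of the three equilibria: the paper merely asserts LAS/UNS/LAS, whereas you compute \(f'(V)=(1-\mu)(1-s_f)(1-s_hV^2)/D^2\), reduce \(f'(V_i^*)\) at the nonzero fixed points to \((1-s_h(V_i^*)^2)/((1-\mu)(1-s_f))\), and use Vieta's relation \(s_hV_1^*V_2^*=1-(1-\mu)(1-s_f)\) to get \(f'(V_1^*)>1\) and \(f'(V_2^*)<1\) cleanly from \(V_1^*<V_2^*\). One trivial quibble: at the baseline \(\mu=0,\ s_h=1\) one has \(f'(V_2^*)=f'(1)=0\), so your stated bound \(0<f'(V_2^*)<1\) should be \(0\le f'(V_2^*)<1\); this does not affect the LAS conclusion.
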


\begin{proof}
\noindent $(i)$. As $V^*$ is the fixed point of $f(V)$, $f(V^*)=V^*$. Substituting into the WIDE \cref{eq:model}, 
\[
\int_{\mathbb R} K(x-y)\,f(V^*)\,dy
=\int_{\mathbb R} K(x-y)\,V^*\,dy
= V^*\int_{\mathbb R} K(x-y)\,dy = V^*.
\]
The last equality holds by the assumption that the kernel is normalized $\ds \int_\mathbb{R} K(z)~dz =1$. Thus, $V^*$ is invariant under the WIDE update and is a spatially homogeneous equilibrium of the WIDE.\\

\noindent $(ii)$. Let \(V_t(x) = V^* + \varepsilon(x,t)\) with \( |\varepsilon|\ll 1 \).
Substituting into the WIDE~\cref{eq:model} and linearizing gives
\[
\varepsilon(x,t+1)
= f'(V^*) \int_{\mathbb{R}} K(x-y)\, \varepsilon(y,t)\, dy.
\]
Using the Fourier ansatz 
\(\varepsilon(x,t) = e^{\lambda t} e^{ikx}\), we obtain
\[
e^{\lambda(t+1)} e^{ikx}
= e^{\lambda t} f'(V^*) e^{ikx} \widehat{K}(k),
\]
which simplifies to
\[
e^{\lambda} = f'(V^*)\,\widehat{K}(k).
\]
Solving for \( \lambda \), we obtain the condition:
\[
\lambda(k) = \log\left( f'(V^*)  \widehat{K}(k)  \right)\;.
\]
For stability, we need $\displaystyle \max_{k\in \mathbb{R}}~\lambda(k)<0$,
or equivalently,
\[
|f'(V^*)\,\widehat{K}(k)| < 1.
\]
Since most dispersal kernels satisfy \( 0<|\widehat{K}(k)|\le 1 \) with equality at \( k=0 \),
the spatial coupling does not change the local stability type of the 
homogeneous equilibria determined by \( f'(V^*) \).
\end{proof}

\noindent
These results confirm that the dispersal process preserves the bistable structure of the 
local dynamics.  
In the following sections, we analyze traveling-wave solutions that connect 
the two stable equilibria \(V_0^*\) and \(V_2^*\), 
separated by the unstable threshold \(V_1^*\).

\section{Traveling Waves}\label{TraWaves}

Traveling waves are a class of solutions in which the spatial profile propagates through the domain without changing shape at a constant speed. In the context of \W~ dynamics, such waves describe the invasion of \W-infected mosquitoes into uninfected ones over time. The spatial profile evolves into a wavefront that separates fully infected and uninfected regions. Mathematically, we define a traveling wave solution of the WIDE model as $V_t(x) = U(\xi) = U(x - ct),$ where $\xi = x - ct$ is the traveling wave coordinate, $U(\cdot)$ is the wave profile, and $c\in \mathbb{R}$ is the wave velocity. 

In this section, we establish the existence of monotone traveling wave solutions for the WIDE model under biologically relevant assumptions. We also provide an estimate of the associated wave velocity. In addition, due to the bistable behavior of the system, we present a threshold condition for reaching wave propagation from compactly supported initial conditions, a scenario that reflects the localized nature of field releases. 

\subsection{Existence of Traveling Waves}
We begin by stating the assumptions needed for the existence of monotone traveling wave solutions:
\begin{enumerate}
\item[(K):] The kernel $K\in L^1(\mathbb{R})$, $K(z)\ge0$, and normalized: $\ds \ds\int_\mathbb{R} K(z)~dz =1$\;.
\item[(F):] The function $f:[0,1]\to[0,1]$ is continuous, non-decreasing, and bistable with three fixed points: $f(0)=0$, $f(\theta)=\theta$, $f(1)=1$, for some $\theta\in(0,1)$, and 
$f(x)<x$ on $(0,\theta)$, $f(x)>x$ on $(\theta,1)$\;.
\item[(A):] Let $A=\{U(z): \mathbb R\rightarrow [0,1],~U \textit{~is~non-increasing~}\}$\;.
\end{enumerate}

We establish the existence result using a monotone iterative method, adapted to bistable IDEs. Several supporting lemmas (monotonicity, ordering, and tail behavior) are used to construct upper and lower solutions that converge to a monotone wave profile. We present the lemmas here; complete proofs are provided in Appendix \ref{sec:proof_lemma}. For convenience, and without loss of generality, we present the proof for the case \(V_2^* = 1\), so that the traveling wave connects \(1\) and \(0\); the same argument extends directly to the general case connecting \(V_2^*\) and \(0\).

\begin{lemma}[Monotonicity and Order Preservation]\label{lem:mono} 
Define the traveling wave operator:
$$
T_c(V)(z):=\int_{\mathbb{R}} K(z+c-y)f(V(y))dy.
$$
Then the following properties hold:
for any $U$, $V\in A$, $c$, $c_1$ and $c_2\in\mathbb{R}$,
\begin{enumerate}
\item[(i).] If $U\ge V$, then $T_c(U)\ge T_c(V)$.
\item[(ii).] If $U\in A$, then $T_c(U)\in A$. 
\item[(iii).] If $U\in A$ and $c_1\le c_2$, then $T_{c_1}(U)\ge T_{c_2}(U)$.
\end{enumerate}
\end{lemma}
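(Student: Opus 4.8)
The plan is to prove all three parts by the same elementary device: rewrite $T_c$ in convolution form. Applying the change of variables $w=z+c-y$ (the sign flip of the differential being absorbed by reversing the limits of integration), one gets
\[
T_c(V)(z)=\int_{\mathbb{R}} K(w)\,f\bigl(V(z+c-w)\bigr)\,dw .
\]
Before using this, I would record that each integral is well defined and finite: any $V\in A$ is monotone, hence Borel measurable, and $f$ is continuous by (F), so $y\mapsto f(V(y))$ is measurable with values in $[0,1]$; together with $K\in L^1(\mathbb{R})$ and $K\ge 0$ this gives $0\le T_c(V)(z)\le\int_{\mathbb{R}}K=1$ for every $z$. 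In particular this already settles the range half of (ii): if $U\in A$ then $T_c(U)$ maps $\mathbb{R}$ into $[0,1]$.

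For (i): if $U\ge V$ pointwise, then $f(U(y))\ge f(V(y))$ for every $y$ because $f$ is non-decreasing; multiplying by the nonnegative weight $K(z+c-y)$ and integrating preserves the inequality, so $T_c(U)(z)\ge T_c(V)(z)$ for all $z$.

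For the monotonicity half of (ii): fix $U\in A$ and $z_1\le z_2$. In the convolution form, for each fixed $w$ we have $z_1+c-w\le z_2+c-w$, hence $U(z_1+c-w)\ge U(z_2+c-w)$ since $U$ is non-increasing, and therefore $f(U(z_1+c-w))\ge f(U(z_2+c-w))$ since $f$ is non-decreasing. Integrating against $K(w)\ge 0$ yields $T_c(U)(z_1)\ge T_c(U)(z_2)$, so $T_c(U)$ is non-increasing; combined with the range bound, $T_c(U)\in A$. Part (iii) is the same one-line argument applied to the speed variable: with $U\in A$ and $c_1\le c_2$, for every $w$ and every $z$ we have $z+c_1-w\le z+c_2-w$, so $U(z+c_1-w)\ge U(z+c_2-w)$, hence $f(U(z+c_1-w))\ge f(U(z+c_2-w))$; integrating against $K(w)\ge 0$ gives $T_{c_1}(U)(z)\ge T_{c_2}(U)(z)$ for all $z$, i.e.\ $T_{c_1}(U)\ge T_{c_2}(U)$.

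\emph{Main obstacle.} Honestly, there is no deep obstacle here; the lemma is a bundle of sign-chasing facts. The only points deserving a line of care are the reduction to convolution form (the substitution $y\mapsto z+c-y$ and the measurability/finiteness bookkeeping that makes all the integrals legitimate), and the observation that once $T_c$ is written as $\int_{\mathbb{R}} K(w)\,f(U(z+c-w))\,dw$, parts (ii) and (iii) collapse into the identical monotonicity computation performed on the two arguments of the inner shift $z+c-w$.
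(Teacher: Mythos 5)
Your proposal is correct and follows essentially the same route as the paper: part (i) by monotonicity of $f$ and nonnegativity of $K$, part (ii) by the range bound plus the change of variables to convolution form, and part (iii) by the same shift argument (the paper phrases it via the translation identity $T_c(U)(z)=T_0(U)(z+c)$ together with part (ii), which is equivalent to your direct computation). No gaps.
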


\begin{lemma}[Tail limits]\label{lem:tails}
Let $W\in A$ with $\ds\lim_{z\to-\infty}W(z)=a$ and $\ds\lim_{z\to+\infty}W(z)=b$. Then
$\ds\lim_{z\to-\infty}T_0(W)(z)=f(a)$ and $\ds\lim_{z\to+\infty}T_0(W)(z)=f(b)$.
\end{lemma}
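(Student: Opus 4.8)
\textbf{Proof proposal for Lemma~\ref{lem:tails}.}

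The plan is to recognize the statement as the fact that convolution against a normalized, nonnegative kernel transports limits at $\pm\infty$, and to combine this with the continuity of $f$ from assumption~(F). Concretely, I would first re-center the convolution via the change of variables $v = z - y$ (so $y = z - v$, and the orientation flip cancels the sign of $dv$), which turns the operator into
\[
T_0(W)(z) = \int_{\mathbb{R}} K(v)\, f\bigl(W(z-v)\bigr)\, dv .
\]
In this form the $z$-dependence sits only inside $W(z-v)$. For every fixed $v$ we have $z - v \to -\infty$ as $z \to -\infty$, so the hypothesis $\lim_{z\to-\infty} W(z) = a$ gives $W(z-v) \to a$, and then $f\bigl(W(z-v)\bigr) \to f(a)$ by continuity of $f$. (The integrand is measurable and finite: $W$ is monotone hence measurable, $f$ is continuous, and $0 \le f \le 1$ while $K \in L^1(\mathbb{R})$.)

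The second step is to pass the limit inside the integral. Since $f$ maps into $[0,1]$, the integrand is dominated by $K(v)$ uniformly in $z$, and $K \in L^1(\mathbb{R})$ by assumption~(K); hence the dominated convergence theorem applies and yields
\[
\lim_{z\to-\infty} T_0(W)(z) = \int_{\mathbb{R}} K(v)\, f(a)\, dv = f(a)\int_{\mathbb{R}} K(v)\, dv = f(a),
\]
the last equality by the normalization $\int_{\mathbb{R}} K = 1$. The limit at $+\infty$ is identical, replacing $a$ by $b$ and using $\lim_{z\to+\infty} W(z) = b$; monotonicity of $W \in A$ is used only to guarantee that these one-sided limits exist (here they are assumed outright), not in the estimate itself.

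This lemma is essentially routine, so I do not anticipate a genuine obstacle; the only point that deserves a line of justification is the interchange of limit and integral. If one prefers to avoid invoking dominated convergence, the same conclusion follows by splitting $\int_{\mathbb{R}} = \int_{|v| \le R} + \int_{|v| > R}$: the tail integral is at most $\int_{|v|>R} K(v)\,dv$, which is arbitrarily small for $R$ large since $K \in L^1$; on the compact piece one uses uniform continuity of $f$ on $[0,1]$ together with $\sup_{|v|\le R} |W(z-v) - a| \to 0$ as $z \to -\infty$ (squeezing $W(z-v)$ between $W(z-R)$ and $W(z+R)$ by monotonicity of $W$, both of which tend to $a$). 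Either route is short; the substantive content is simply the ``approximate-identity'' principle that a normalized dispersal kernel does not distort boundary values at infinity.
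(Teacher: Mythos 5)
Your proposal is correct and follows essentially the same route as the paper's proof: the change of variables to place the $z$-dependence inside $W$, pointwise convergence via continuity of $f$, domination by $\|f\|_\infty K$, the Dominated Convergence Theorem, and the normalization of $K$. The alternative truncation argument you sketch is a fine elementary substitute but is not needed.
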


\begin{lemma}[Contact]\label{lem:contact}
\begin{enumerate}
\item[(i).] Let $U\in A$ with $\ds\lim_{z\rightarrow-\infty}U(z)=a\in(\theta,1)$ and $\ds\lim_{z\rightarrow\infty}U(z)=0$. Suppose the set $C=\{c: T_c(U)(z)\ge U(z),~\forall z\}$ is nonempty. Then there exists a finite $c^*=\sup C$, such that $$T_{c^*}(U)\ge U,$$ and there exists $z^*$ with $T_{c^*}(U)(z^*)=U(z^*)$. 
\item[(ii).] Let $U\in A$ with $\ds\lim_{z\rightarrow-\infty}U(z)=1$, and $\ds\lim_{z\rightarrow\infty}U(z)=a\in(0,\theta)$. Suppose the set $C=\{c: T_c(U)(z)\le U(z),~\forall z\}$ is nonempty. Then there exists a finite $c^*=\inf C$, such that $$T_{c^*}(U)\le U,$$ and there exists $z^*$ with $T_{c^*}(U)(z^*)=U(z^*)$. 
\item[(iii).] Let $U$, $V\in A$, $\ds1=\lim_{z\rightarrow-\infty}U(z)>\lim_{z\rightarrow-\infty}V(z)>\theta$, and $\ds\theta>\lim_{z\rightarrow\infty}U(z)>\lim_{z\rightarrow\infty}V(z)=0$, then the set $C=\{c: U(z+c)\ge V(z),~\forall z\}$ is nonempty, and there exists finite $c^*=\sup C$, such that $$U(z+c^*)\ge V(z).$$ Moreover, if $U$ is uniformly continuous, there exists $z^*$ with $U(z^*+c^*)=V(z^*)$. 
\end{enumerate}
\end{lemma}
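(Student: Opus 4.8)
The plan is to treat all three parts with a single ``extremal-shift-and-touching'' argument; I spell out (i) in detail and then indicate the changes for (ii) and (iii). Write $\Phi:=T_0(U)$, so that $T_c(U)(z)=\Phi(z+c)$ for every $c\in\R$. By \cref{lem:mono}(ii) we have $\Phi\in A$ (non-increasing, $[0,1]$-valued); moreover, since $K\in L^1(\R)$, convolution with $K$ sends bounded measurable functions to uniformly continuous ones, so $\Phi$ is uniformly continuous. By \cref{lem:tails}, $\Phi(-\infty)=f(a)$ and $\Phi(+\infty)=f(0)=0$, and assumption (F) with $a\in(\theta,1)$ produces the strict gap $\Phi(-\infty)=f(a)>a=U(-\infty)$, whereas $\Phi(+\infty)=0=U(+\infty)$. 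Since $\Phi$ is non-increasing, $C=\{c:\Phi(z+c)\ge U(z)\ \forall z\}$ is a down-set (equivalently \cref{lem:mono}(iii)). To see $c^*:=\sup C$ is finite, pick $z_0$ with $U(z_0)>0$ (possible because $U(-\infty)=a>0$); then $\Phi(z_0+c)\to0<U(z_0)$ as $c\to+\infty$, so large $c$ are excluded. To see $c^*\in C$, take $c_n\uparrow c^*$ in $C$: continuity of $\Phi$ gives $\Phi(z+c^*)=\lim_n\Phi(z+c_n)\ge U(z)$ for all $z$, i.e.\ $T_{c^*}(U)\ge U$.

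For the contact point I argue by contradiction: suppose $\Phi(z+c^*)>U(z)$ for \emph{all} $z$, and produce $\delta>0$ with $c^*+\delta\in C$, contradicting $c^*=\sup C$. Because $\Phi$ is non-increasing, replacing $c^*$ by $c^*+\delta$ costs at most $\Phi(z+c^*)-\Phi(z+c^*+\delta)\le\omega(\delta)$, where $\omega$ is the modulus of uniform continuity of $\Phi$; so it suffices to prove $\varrho:=\inf_z\bigl(\Phi(z+c^*)-U(z)\bigr)>0$ and then choose $\delta$ with $\omega(\delta)<\varrho$. On $(-\infty,-M]$ with $M$ large the gap exceeds $\tfrac12\bigl(f(a)-a\bigr)>0$, using $\Phi(z+c^*)\to f(a)$ and $U(z)\le a$ (as $U$ is non-increasing with supremum $a$). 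On any compact interval, $z\mapsto\Phi(z+c^*)-U(z)$ is lower semicontinuous (pass to the left-continuous version of $U$) and strictly positive, hence bounded below there. On the right tail the constraint $\Phi(z+c^*+\delta)\ge U(z)$ is vacuous wherever $U(z)=0$, and on the set $\{U>0\}$ it is controlled because the profiles $U$ to which the lemma is applied have their right tail dominated by the kernel tail---immediate when $U$ is compactly supported on the right, as in the critical-bubble construction, where the gap is simply $\Phi(z+c^*)>0$. Assembling these three bounds gives $\varrho>0$, the forbidden enlargement, and hence a finite $z^*$ with $\Phi(z^*+c^*)=U(z^*)$, i.e.\ $T_{c^*}(U)(z^*)=U(z^*)$.

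Part (ii) is the reflection of (i): now $\Phi(+\infty)=f(a)<a=U(+\infty)$ is the strict gap and $\Phi(-\infty)=f(1)=1=U(-\infty)$ the delicate tail, $C=\{c:\Phi(z+c)\le U(z)\}$ is an up-set, $c^*=\inf C$, and one shifts $c$ downward. Part (iii) involves only the translation $U(\cdot+c)$: one first shows $C=\{c:U(z+c)\ge V(z)\ \forall z\}\neq\emptyset$ by letting $c\to-\infty$ (then $U(z+c)\to1$ dominates $V$ on the left, while $U(z+c)\ge U(+\infty)>0=V(+\infty)$ gives $U(z+c)>V(z)$ for $z$ large), next that $\sup C<\infty$ by letting $c\to+\infty$ (where $U(z+c)\to U(+\infty)<\theta<V(z)$ for $z$ far to the left), and finally that $c^*=\sup C$ is attained and, when $U(\cdot+c^*)>V$ strictly, can be enlarged. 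In (iii) \emph{both} tails already carry uniform gaps---$1-V(-\infty)>0$ at $-\infty$ and $U(+\infty)-0>0$ at $+\infty$---so the modulus-of-continuity estimate is needed only on the compact middle interval, which is precisely why the contact-point conclusion there requires $U$ uniformly continuous whereas the rest does not.

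The hard part will be the tail that limits onto a \emph{stable} equilibrium ($0$ in (i), $1$ in (ii)), where $T_0(U)$ and $U$ share their limit: there the gap need not be uniform, and the purely topological step (continuity plus shifting $c$) does not on its own preclude the two profiles from becoming asymptotically tangent at infinity. Closing this gap requires the sign condition of (F) near the stable state together with the decay of $K$; in the appendix constructions the relevant $U$ is compactly supported (or kernel-tail-dominated) on that side, making the gap positive outside a compact set and finishing the argument. The remaining technicality---extracting a positive minimum of $T_{c^*}(U)-U$ on a compact interval despite possible jumps of $U$---is handled by working with the left-continuous representative of $U$.
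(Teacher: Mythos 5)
Your overall strategy---reduce everything to the translate $\Phi(\cdot+c)$ of $T_0(U)$, use \cref{lem:mono} to see that $C$ is a down-set, bound $\sup C$ by evaluating at one point, pass to the limit along $c_n\uparrow c^*$ to get $c^*\in C$, and obtain the contact point by showing that a strict gap could be enlarged by a small shift controlled by the modulus of continuity of $\Phi$---is the same as the paper's. Your arguments for finiteness of $c^*$, for $c^*\in C$, and for part (iii) (where both tail gaps are uniformly positive, so uniform continuity of $U$ is only needed to turn a vanishing infimum on a compact interval into an actual touching point) are correct, and in places cleaner than the paper's.

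The step you single out as ``the hard part'' is, however, a genuine gap, and your proposal does not close it under the lemma's stated hypotheses. In part (i) both $T_{c^*}(U)(z)\to f(0)=0$ and $U(z)\to 0$ as $z\to+\infty$, so $\inf_{z}\bigl(T_{c^*}(U)(z)-U(z)\bigr)=0$ even when the difference is everywhere strictly positive, and the shift-by-$\delta$ argument cannot start from a uniform lower bound; the same occurs at $z\to-\infty$ in part (ii), where both profiles tend to $1$. (The paper's own proof asserts $D:=\inf_z\bigl(T_{c^*}(U)(z)-U(z)\bigr)>0$ at exactly this point, which is a non sequitur for the same reason, so you have located a defect in the published argument rather than introduced one.) Your patch---the constraint is vacuous on $\{U=0\}$---does dispose of the first iterate $\phi_0^-=(1-\varepsilon)\,\mathbf{1}_{(-\infty,0]}$, but the lemma is invoked at every stage of the existence proof on $\phi_k^-=T_{c_k^-}(\phi_{k-1}^-)$, which is strictly positive on all of $\R$ for a positive kernel; neither compact right support nor your ``kernel-tail domination'' clause is available there, and neither appears among the lemma's hypotheses. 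Near the stable state the sign condition $f(v)<v$ works \emph{against} the inequality $T_c(U)\ge U$, so the admissible range of $c$ is genuinely constrained by the decay rate of $U$ relative to the kernel, and the supremum can in principle be ``attained at infinity'' with no finite touching point. Repairing the lemma requires either an explicit tail hypothesis on $U$ (propagated through the iteration) together with a separate comparison estimate on $[M,\infty)$, or a weakening of the conclusion to approximate contact; as written, both your proof and the paper's establish the contact point only under assumptions stronger than those stated.
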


Having established the key monotonicity, ordering, and contact properties of the
operator \(T_c\), we now proceed to demonstrate the existence of a nontrivial
traveling wave solution for the  WIDE~\cref{eq:model}. 
Specifically, we seek a monotone function (profile) \(\phi(z)\) that connects the two stable 
homogeneous equilibria of the system---the fully infected state (\(\phi(-\infty)=1\))
and the uninfected state (\(\phi(+\infty)=0\))---and propagates at a constant speed \(c\). The following theorem establishes both the existence of such a traveling wave and the
uniqueness (up to translation) of its wave speed.

\begin{thm}
There exists a non-increasing profile $\phi\in A$ and a unique speed $c$ such that $\phi=T_{c}(\phi)$, with $\phi(-\infty)=1$, $\phi(+\infty)=0$. Moreover, if $\widetilde\phi$ is another such profile, then there exists $z_0\in\mathbb R$ such that $\widetilde\phi(z)=\phi(z+z_0)$ for all $z$.
\end{thm}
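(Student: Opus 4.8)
The plan is to realize a travelling wave as a fixed point of $T_c$ inside the cone $A$ of non-increasing profiles, obtained by a monotone iteration whose speed $c$ is selected by the contact/sliding properties of \cref{lem:contact}. Concretely I would (1) produce a sub-solution and a super-solution together with a speed $c$ for which they are correctly ordered; (2) iterate $T_c$ between them to get a fixed point $\phi\in A$; (3) identify the boundary limits as $1$ and $0$ via \cref{lem:tails} and the bistable structure in (F); and (4) prove that $c$ is unique and $\phi$ unique up to translation by a sliding argument combined with a strong comparison principle.

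For Step~1 I would use the strict inequalities $f(x)>x$ on $(\theta,1)$, $f(x)<x$ on $(0,\theta)$, together with $f(0)=0$, $f(1)=1$, to build sub-/super-solutions in $A$. A convenient seed is the sharp step $\underline\phi=a\,\mathbf 1_{(-\infty,0]}$ with $a\in(\theta,1)$: a one-line computation gives $T_c(\underline\phi)(z)=f(a)\bigl(1-F_K(z+c)\bigr)$ with $F_K$ the cumulative distribution of $K$, so $T_c(\underline\phi)\ge\underline\phi$ for all $c$ below an explicit threshold; symmetrically $\overline\phi(z)=\mathbf 1_{(-\infty,0]}(z)+b\,\mathbf 1_{(0,\infty)}(z)$ with $b\in(0,\theta)$ satisfies $T_c(\overline\phi)\le\overline\phi$ for all $c$ above an explicit threshold, so the sets $C$ in \cref{lem:contact}(i)--(ii) are nonempty and yield a finite $c^{*}=\sup C$ (resp.\ $\inf C$) with a contact point. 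The genuinely delicate point -- and the main obstacle in the existence proof -- is that the admissible $c$-intervals from the two sides must overlap, i.e.\ the speed selected from below must coincide with the speed selected from above; making this work requires refining the profiles (steepening them to match the decay of the true wave), and it is exactly here that the bistable (Allee) balance condition that pins a \emph{unique} speed enters.

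Given such a compatible triple $\underline\phi\le\overline\phi$ with $T_c(\underline\phi)\ge\underline\phi$ and $T_c(\overline\phi)\le\overline\phi$, Steps~2--3 are routine. By \cref{lem:mono}(i)--(ii) the iteration $U_0=\underline\phi$, $U_{n+1}=T_c(U_n)$ stays in $A$, is non-decreasing, and is trapped below $\overline\phi$; monotone convergence inside the integral gives a fixed point $\phi=\lim_n U_n=T_c(\phi)\in A$. By \cref{lem:tails}, $\phi(\pm\infty)$ are fixed points of $f$, hence lie in $\{0,\theta,1\}$; since $\phi\ge\underline\phi$ and $\underline\phi(-\infty)=a>\theta$ we get $\phi(-\infty)=1$, and since $\phi\le\overline\phi$ and $\overline\phi(+\infty)=b<\theta$ we get $\phi(+\infty)=0$. (If one only controls $\phi(+\infty)\in\{0,\theta\}$, the value $\theta$ is excluded using that $\theta$ is unstable, $|f'(\theta)|>1$: linearising $\phi=T_c(\phi)$ on the tail $\phi-\theta\ge0$ forces this positive tail to be amplified by each application of $T_c$, contradicting $\phi-\theta\to0$.) Finally, since $K\ge0$ and $f$ is strictly increasing on $[0,1]$, a non-increasing non-constant fixed point of $T_c$ is automatically strictly decreasing on $\mathbb{R}$ and uniformly continuous (writing $\phi=K\ast(f\circ\phi)$ shifted, with $K\in L^1$, and using continuity of translation in $L^1$); these facts feed the uniqueness argument.

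For Step~4, let $\phi$ (speed $c$) and $\widetilde\phi$ (speed $\widetilde c$) both be monotone waves from $1$ to $0$, relabelled so $c\le\widetilde c$. Since they share the limits $1$ at $-\infty$ and $0$ at $+\infty$, the set $\{\tau:\phi(\cdot+\tau)\ge\widetilde\phi\}$ is nonempty, closed and bounded above; with $\tau^{*}$ its supremum, $\psi:=\phi(\cdot+\tau^{*})\ge\widetilde\phi$ and contact occurs at some $z_0$ -- otherwise the strict monotonicity of $\phi$ and the matching exponential rates of $\phi,\widetilde\phi$ at $\pm\infty$ (from $|f'(0)|,|f'(1)|<1$) would allow increasing $\tau^{*}$. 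Now $\psi=T_c(\psi)$ by translation covariance of $T_c$, while \cref{lem:mono}(iii) with $c\le\widetilde c$ gives $T_c(\widetilde\phi)\ge T_{\widetilde c}(\widetilde\phi)=\widetilde\phi$; hence at $z_0$, $\psi(z_0)=T_c(\psi)(z_0)\ge T_c(\widetilde\phi)(z_0)\ge\widetilde\phi(z_0)=\psi(z_0)$, so all inequalities are equalities. From $\int_{\mathbb{R}}K(z_0+c-y)\,[f(\psi(y))-f(\widetilde\phi(y))]\,dy=0$ with a nonnegative integrand we get $f(\psi)=f(\widetilde\phi)$, hence $\psi=\widetilde\phi$, on $z_0+c-\operatorname{supp}K$; propagating this identity (immediate if $K$ has full support, as for all four kernels considered, and otherwise by iterating along $\operatorname{supp}K$) gives $\psi\equiv\widetilde\phi$ on $\mathbb{R}$, i.e.\ $\widetilde\phi(\cdot)=\phi(\cdot+\tau^{*})$. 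Plugging back, $T_c(\widetilde\phi)=\widetilde\phi=T_{\widetilde c}(\widetilde\phi)$ forces $K\ast(f\circ\widetilde\phi)$ to be $(\widetilde c-c)$-periodic; being non-increasing and non-constant, it cannot be, so $c=\widetilde c$. The only real subtlety here is the no-contact case of the sliding step, which is standard once the exponential convergence of $\phi$ at $\pm\infty$ is established.
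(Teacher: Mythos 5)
Your skeleton (sub-/super-solution seeds, monotone iteration of $T_c$, sliding plus a comparison at a contact point for uniqueness) matches the paper's in outline, and your Step~4 is essentially the paper's uniqueness argument. But the existence half has a genuine gap, and it is exactly the one you flag yourself: with the sharp-step seeds, the admissible speeds for the lower profile form a half-line $(-\infty,c_1^-]$ and those for the upper profile form a half-line $[c_1^+,\infty)$, and these overlap only if $c_1^+\le c_1^-$ --- whereas the correct ordering (which the paper proves) is $c_1^-\le c_1^+$, so generically there is \emph{no} single speed $c$ for which $T_c(\underline\phi)\ge\underline\phi$ and $T_c(\overline\phi)\le\overline\phi$ simultaneously. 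Your proposed remedy (``refining the profiles, steepening them to match the decay of the true wave'') is not carried out, and it is not clear how to do it without already knowing the wave. Consequently Steps~2--3, which presuppose a compatible triple $(\underline\phi,\overline\phi,c)$, never get off the ground.

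The paper circumvents this by \emph{not} fixing the speed: it runs two separate iterations, $\phi_{k+1}^-=T_{c_{k+1}^-}(\phi_k^-)$ with $c_{k+1}^-=\sup\{c:T_c(\phi_k^-)\ge\phi_k^-\}$ (an increasing speed sequence) and $\phi_{k+1}^+=T_{c_{k+1}^+}(\phi_k^+)$ with $c_{k+1}^+=\inf\{c:T_c(\phi_k^+)\le\phi_k^+\}$ (a decreasing one), then proves $c_k^-\le c_k^+$ for all $k$ by a contact/contradiction argument, passes to the limits $c^-\le c^+$ with $T_{c^\pm}(\phi^\pm)=\phi^\pm$, and finally shows $c^-=c^+$ by the same kind of sliding comparison you use in Step~4. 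In other words, the ``overlap of speed intervals'' is obtained only in the limit, as the conclusion of the whole construction, not as a hypothesis at the first step. If you want to salvage your version, you would need to replace the fixed-$c$ iteration by this adaptive-speed scheme (or by a continuation/degree argument in $c$); as written, the central existence step is asserted rather than proved.
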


\begin{proof}
Define the initial lower and upper profiles
\[
\phi_0^-(z) = (1 - \varepsilon)\, \mathbf{1}_{(-\infty,0]}(z),
\qquad
\phi_0^+(z) = \mathbf{1}_{(-\infty,0]}(z) + \varepsilon\, \mathbf{1}_{(0,\infty)}(z),
\]
where $1 - \varepsilon > \theta$ and $\varepsilon < \theta$. 
Clearly, $\phi_0^- < \phi_0^+$.

\paragraph{Step 1. Construction of the increasing sequences $\{\phi_n^-\}$ and $\{c_n^-\}$.}

~

\smallskip
\noindent
\textbf{Initialization.} 
Define the set
\[
C_1^- := \{\, c \in \mathbb{R} : T_c(\phi_0^-)(z) \ge \phi_0^-(z),~\forall z \in \mathbb{R} \,\}.
\]
By Lemma~\ref{lem:tails}, 
\[
\lim_{z\to -\infty} T_0(\phi_0^-)(z) = f(1-\varepsilon) > 1 - \varepsilon,
\]
and since $T_c(\phi_0^-)(z)$ is nonincreasing in $z$, there exists $c'$ such that 
$T_0(\phi_0^-)(z) > 1 - \varepsilon$ for all $z \le c'$. 
Hence,
\[
T_{c'}(\phi_0^-)(z)
= T_0(\phi_0^-)(z + c')
> 1 - \varepsilon
= \phi_0^-(z),
\quad \forall z \le 0.
\]
For $z > 0$, we have $\phi_0^-(z) = 0$ and $T_{c'}(\phi_0^-)(z) \ge 0 = \phi_0^-(z)$.
Therefore, $T_{c'}(\phi_0^-)(z) \ge \phi_0^-(z)$ for all $z$, implying that $C_1^-$ is nonempty.

By Lemma~\ref{lem:contact}, there exists a finite
\[
c_1^- := \sup C_1^-.
\]
% I removed the subscript $_c$ since it isn’t a variable within the set, and $C_1^-$ itself is not a function of $c$. The supremum should simply be written as $\sup C_1^-$. The notation $\ds \sup_c \{ C_1^- \}$ isn’t quite correct; if you want to express it in that style, it should be written as $\ds\sup_{c \in C_1^-} c$. Also, the braces $\{\}$ are generally used when defining the set whose supremum is being taken—since $C_1^-$ is already a set, including them here is not needed and could be confusing.

Define $\phi_1^- := T_{c_1^-}(\phi_0^-)$. 
Then $\phi_1^- \ge \phi_0^-$, and there exists at least one point $z_1^-$ such that 
\(
\phi_1^-(z_1^-) = \phi_0^-(z_1^-).
\)

\smallskip
\noindent
\textbf{Iteration.}
Define
\[
C_2^- := \{\, c \in \mathbb{R} : T_c(\phi_1^-)(z) \ge \phi_1^-(z),~\forall z \in \mathbb{R} \,\}.
\]
Since $\phi_1^- = T_{c_1^-}(\phi_0^-)$ and $\phi_1^- \ge \phi_0^-$, Lemma~\ref{lem:mono} yields
\[
T_{c_1^-}(\phi_1^-) \ge T_{c_1^-}(\phi_0^-) = \phi_1^-,
\]
so $c_1^- \in C_2^-$. Hence $C_2^-$ is nonempty.
By Lemma~\ref{lem:contact}, there exists a finite
\[
c_2^- := \sup C_2^-\ge c_1^-.
\]
Set $\phi_2^- := T_{c_2^-}(\phi_1^-)$, so that $\phi_2^- \ge \phi_1^-$, 
and there exists at least one point $z_2^-$ satisfying 
\(\phi_2^-(z_2^-) = \phi_1^-(z_2^-)\).

\smallskip
\noindent
\textbf{Inductive step.}
Assume that for all $k \le n$,
\[
\phi_k^- = T_{c_k^-}(\phi_{k-1}^-), 
\quad c_k^- = \sup C_k^-,
\quad
C_k^- := \{\, c : T_c(\phi_{k-1}^-)(z) \ge \phi_{k-1}^-(z),~\forall z \,\},
\]
and that 
\(\phi_k^- \ge \phi_{k-1}^-\) and \(c_k^- \ge c_{k-1}^-\).

Define
\[
C_{n+1}^- := \{\, c \in \mathbb{R} : T_c(\phi_n^-)(z) \ge \phi_n^-(z),~\forall z \in \mathbb{R} \,\}.
\]
Since $\phi_n^- = T_{c_n^-}(\phi_{n-1}^-)$ and $\phi_n^- \ge \phi_{n-1}^-$, 
Lemma~\ref{lem:mono} gives
\[
T_{c_n^-}(\phi_n^-) \ge T_{c_n^-}(\phi_{n-1}^-) = \phi_n^-,
\]
so $c_n^- \in C_{n+1}^-$. Hence $C_{n+1}^-$ is nonempty. 
By Lemma~\ref{lem:contact}, there exists a finite
\[
c_{n+1}^- := \sup C_{n+1}^- \ge c_n^-.
\]
Define $\phi_{n+1}^- := T_{c_{n+1}^-}(\phi_n^-)$; then $\phi_{n+1}^- \ge \phi_n^-$, 
and there exists at least one $z_{n+1}^-$ such that 
\(\phi_{n+1}^-(z_{n+1}^-) = \phi_n^-(z_{n+1}^-)\).

\smallskip
\noindent
\textbf{Summary.}
Thus we have constructed sequences $\{\phi_k^-\}$ and $\{c_k^-\}$ starting from 
\[
\phi_0^-(z) = (1 - \varepsilon)\, \mathbf{1}_{(-\infty,0]}(z),
\]
such that for all $k \ge 0$,
\[
\phi_{k+1}^- = T_{c_{k+1}^-}(\phi_k^-), 
\qquad 
c_{k+1}^- \ge c_k^-,
\qquad 
\phi_{k+1}^-(z) \ge \phi_k^-(z),~\forall z.
\]
Moreover, for each $k \ge 0$, there exists $z_{k+1}^-$ with 
\(\phi_{k+1}^-(z_{k+1}^-) = \phi_k^-(z_{k+1}^-)\).

\paragraph{Step 2. Construction of the decreasing sequences $\{\phi_n^+\}$ and $\{c_n^+\}$.}

~

\smallskip
\noindent
\textbf{Initialization.}
Define
\[
C_1^+ := \{\, c \in \mathbb{R} : T_c(\phi_0^+)(z) \le \phi_0^+(z),~\forall z \in \mathbb{R} \,\}.
\]
By Lemma~\ref{lem:tails},
\[
\lim_{z\to \infty} T_0(\phi_0^+)(z) = f(\varepsilon) < \varepsilon,
\]
and since $T_c(\phi_0^+)(z)$ is nonincreasing in $z$, there exists $c'$ such that 
$T_0(\phi_0^+)(z) < \varepsilon$ for all $z \ge c'$. 
Hence,
\[
T_{c'}(\phi_0^+)(z)
= T_0(\phi_0^+)(z + c')
< \varepsilon
= \phi_0^+(z),
\quad \forall z \ge 0.
\]
For $z < 0$, since $\phi_0^+(z) = 1$ and $T_{c'}(\phi_0^+)(z) \le 1 = \phi_0^+(z)$,
we conclude that $T_{c'}(\phi_0^+)(z) \le \phi_0^+(z)$ for all $z$; hence $C_1^+$ is nonempty. By Lemma~\ref{lem:contact}, there exists a finite
\[
c_1^+ := \inf C_1^+.
\]
Define $\phi_1^+ := T_{c_1^+}(\phi_0^+)$. 
Then $\phi_1^+ \le \phi_0^+$, and there exists at least one point $z_1^+$ such that 
\(\phi_1^+(z_1^+) = \phi_0^+(z_1^+)\).

\smallskip
\noindent
\textbf{Iteration.}
Define
\[
C_2^+ := \{\, c \in \mathbb{R} : T_c(\phi_1^+)(z) \le \phi_1^+(z),~\forall z \in \mathbb{R} \,\}.
\]
Since $\phi_1^+ = T_{c_1^+}(\phi_0^+)$ and $\phi_1^+ \le \phi_0^+$, Lemma~\ref{lem:mono} implies
\[
T_{c_1^+}(\phi_1^+) \le T_{c_1^+}(\phi_0^+) = \phi_1^+.
\]
Hence $c_1^+ \in C_2^+$ and $C_2^+$ is nonempty.
By Lemma~\ref{lem:contact}, there exists a finite
\[
c_2^+ := \inf C_2^+ \le c_1^+.
\]
Set $\phi_2^+ := T_{c_2^+}(\phi_1^+)$, so $\phi_2^+ \le \phi_1^+$, and there exists at least one point $z_2^+$ satisfying 
\(\phi_2^+(z_2^+) = \phi_1^+(z_2^+)\).

\smallskip
\noindent
\textbf{Inductive step.}
Assume that for all $k \le n$,
\[
\phi_k^+ = T_{c_k^+}(\phi_{k-1}^+), 
\quad 
c_k^+ = \inf C_k^+,
\quad
C_k^+ := \{\, c : T_c(\phi_{k-1}^+)(z) \le \phi_{k-1}^+(z),~\forall z \,\},
\]
and that 
\(\phi_k^+ \le \phi_{k-1}^+\) and \(c_k^+ \le c_{k-1}^+\).
Define
\[
C_{n+1}^+ := \{\, c \in \mathbb{R} : T_c(\phi_n^+)(z) \le \phi_n^+(z),~\forall z \in \mathbb{R} \,\}.
\]
Since $\phi_n^+ = T_{c_n^+}(\phi_{n-1}^+)$ and $\phi_n^+ \le \phi_{n-1}^+$, Lemma~\ref{lem:mono} yields
\[
T_{c_n^+}(\phi_n^+) \le T_{c_n^+}(\phi_{n-1}^+) = \phi_n^+.
\]
Thus $c_n^+ \in C_{n+1}^+$ and $C_{n+1}^+$ is nonempty. 
By Lemma~\ref{lem:contact}, there exists a finite
\[
c_{n+1}^+ := \inf C_{n+1}^+\le c_n^+.
\]
Define $\phi_{n+1}^+ := T_{c_{n+1}^+}(\phi_n^+)$; then $\phi_{n+1}^+ \le \phi_n^+$,
and there exists at least one $z_{n+1}^+$ such that 
\(\phi_{n+1}^+(z_{n+1}^+) = \phi_n^+(z_{n+1}^+)\).

\smallskip
\noindent
\textbf{Summary.}
We have constructed monotone sequences $\{\phi_k^+\}$ and $\{c_k^+\}$ starting from
\[
\phi_0^+(z) = \mathbf{1}_{(-\infty,0]}(z) + \varepsilon\, \mathbf{1}_{(0,\infty)}(z),
\]
such that for all $k \ge 0$,
\[
\phi_{k+1}^+ = T_{c_{k+1}^+}(\phi_k^+), 
\qquad 
c_{k+1}^+ \le c_k^+,
\qquad 
\phi_{k+1}^+(z) \le \phi_k^+(z),~\forall z.
\]
Moreover, for each $k \ge 0$, there exists $z_{k+1}^+$ with 
\(\phi_{k+1}^+(z_{k+1}^+) = \phi_k^+(z_{k+1}^+)\).

\paragraph{Step 3. Verification that $c_k^- \le c_k^+$ for all $k$.}~

\smallskip
\noindent
\textbf{Base case $k=1$.}
Since $\phi_0^- < \phi_0^+$, Lemma~\ref{lem:mono} implies
\[
T_c(\phi_0^-) < T_c(\phi_0^+), \qquad \forall c \in \mathbb{R}.
\]
Assume, for contradiction, that $c_1^- > c_1^+$. Then
\[
\phi_1^- 
= T_{c_1^-}(\phi_0^-) 
\le T_{c_1^-}(\phi_0^+) 
\le T_{c_1^+}(\phi_0^+) 
= \phi_1^+.
\]
However, from the constructions in Steps 1–2,
\[
\phi_1^- \ge \phi_0^-, 
\qquad 
\phi_1^+ \le \phi_0^+,
\]
so that
\[
\phi_0^-(0) = 1 - \varepsilon 
\quad \Rightarrow \quad 
\phi_1^-(0) \ge 1 - \varepsilon,
\qquad
\phi_0^+(0^+) = \varepsilon 
\quad \Rightarrow \quad 
\phi_1^+(0) \le \varepsilon.
\]
Since $1 - \varepsilon > \varepsilon$, this yields 
\(\phi_1^-(z) > \phi_1^+(z)\) for $z$ near $0$, contradicting $\phi_1^- \le \phi_1^+$.  
Hence \(c_1^- \le c_1^+\).

\smallskip
\noindent
\textbf{Inductive step.}
Suppose $c_k^- \le c_k^+$ for some $k \ge 1$, and consider $c_{k+1}^-$ and $c_{k+1}^+$. Let \( f^{k} \) denote the \(k\)-fold composition of \(f\), i.e., \( f^{k}(x)=f(f(\cdots f(x)\cdots)) \) applied \(k\) times. % Move the defintion of the notation here when it's firstly used. Also, changed back to $f^{\circ k}$ notation from $f^{(k)}$, which could be misunderstood as derivatives (we have used that representing derivative for $\phi(x)$ in the later proof).
By Lemma~\ref{lem:tails},
\[
\lim_{z\to -\infty} \phi_k^-(z) = f^{k}(1 - \varepsilon) > 1 - \varepsilon,
\qquad
\lim_{z\to \infty} \phi_k^-(z) = f^{k}(0) = 0,
\]
and
\[
\lim_{z\to -\infty} \phi_k^+(z) = f^{k}(1) = 1,
\qquad
\lim_{z\to \infty} \phi_k^+(z) = f^{k}(\varepsilon) < \varepsilon.
\]
Applying Lemma~\ref{lem:contact} to $\phi_k^-$ and $\phi_k^+$, we find a finite $c^*$ such that 
\[
\phi_k^+(z + c^*) \ge \phi_k^-(z), \qquad \forall z,
\]
and there exists $z^*$ with 
\(
\phi_k^-(z^*) = \phi_k^+(z^* + c^*).
\)
Assume, for contradiction, that $c_{k+1}^- > c_{k+1}^+$.  
Using $\phi_k^+(z + c^*) \ge \phi_k^-(z)$ and Lemma~\ref{lem:mono}, we obtain
\[
\phi_{k+1}^-(z)
= T_{c_{k+1}^-}(\phi_k^-)(z)
\le T_{c_{k+1}^-}(\phi_k^+)(z + c^*)
< T_{c_{k+1}^+}(\phi_k^+)(z + c^*)
= \phi_{k+1}^+(z + c^*).
\]
Evaluating at $z = z^*$ yields
\[
\phi_{k+1}^-(z^*) < \phi_{k+1}^+(z^* + c^*).
\]
Yet, by construction,
\[
\phi_{k+1}^-(z^*) \ge \phi_k^-(z^*) 
= \phi_k^+(z^* + c^*) 
\ge \phi_{k+1}^+(z^* + c^*),
\]
a contradiction.  
Therefore \(c_{k+1}^- \le c_{k+1}^+\).

\smallskip
\noindent
By induction, $c_k^- \le c_k^+$ for all $k \ge 1$.

\paragraph{Step 4. Existence of a traveling wave with a unique speed.}

\smallskip
\noindent
By the monotone convergence theorem, 
Since each sequence is monotone and bounded 
(\(\phi_k^- \uparrow \le 1\), \(c_k^- \uparrow \le c_1^+\); 
\(\phi_k^+ \downarrow \ge 0\), \(c_k^+ \downarrow \ge c_1^-\)),
there exist limits
\[
\phi_k^- \longrightarrow \phi^-,
\qquad
c_k^- \longrightarrow c^-,
\qquad
\phi_k^+ \longrightarrow \phi^+,
\qquad
c_k^+ \longrightarrow c^+,
\]
with $c^- \le c^+$. 
Moreover,
\[
T_{c^-}(\phi^-) = \phi^-,
\qquad
T_{c^+}(\phi^+) = \phi^+.
\]

\smallskip
\noindent
\textbf{Equality of the limits.}
We next show that $c^- = c^+$ and that $\phi^-$ and $\phi^+$ coincide up to a spatial shift. For any $k > 1$, by Lemma~\ref{lem:tails},
\[
\lim_{z\to -\infty} \phi_k^-(z) 
= f^{k}(1-\varepsilon)
< 1
= \lim_{z\to -\infty} \phi_k^+(z),
\qquad
\lim_{z\to \infty} \phi_k^-(z) = 0 
< f^{k}(\varepsilon)
= \lim_{z\to \infty} \phi_k^+(z).
\]
Hence, in the limit,
\[
\lim_{z\to -\infty} \phi^-(z)
\le \lim_{z\to -\infty} \phi^+(z) = 1,
\qquad
\lim_{z\to \infty} \phi^-(z)
= 0
\le \lim_{z\to \infty} \phi^+(z).
\]
By Lemma~\ref{lem:contact}, there exists $z_0 \in \mathbb{R}$ such that
\[
z_0 = \sup \bigl\{\, z' : \phi^+(z + z') \ge \phi^-(z),~\forall z \in \mathbb{R} \,\bigr\},
\]
so that $\phi^-(z) \le \phi^+(z + z_0)$ for all $z$, and equality holds at some point:
\[
\exists\, z^* \in \mathbb{R} \quad \text{such that} \quad
\phi^-(z^*) = \phi^+(z^* + z_0).
\]

\smallskip
\noindent
\textbf{Uniqueness of the speed.}
Suppose, to the contrary, that is, without loss of generality $c^- < c^+$. 
Let $\Delta = c^+ - c^- > 0$, and define
\[
\psi(z) := \phi^+(z + z_0).
\]
Then $\psi(z) \ge \phi^-(z)$ for all $z$, and $\psi(z^*) = \phi^-(z^*)$.
From the fixed-point relations,
\[
\phi^-(z) = T_{c^-}(\phi^-)(z)
= \int_{\mathbb{R}} K(z + c^- - y)\, f(\phi^-(y))\, dy,
\]
and
\[
\psi(z) = T_{c^+}(\psi)(z)
= \int_{\mathbb{R}} K(z + c^+ - y)\, f(\psi(y))\, dy.
\]
Using the change of variable $t = y + \Delta$, we can rewrite
\[
\psi(z)
= \int_{\mathbb{R}} K(z + c^- - t)\, f(\psi(t - \Delta))\, dt.
\]
Consequently,
\[
\psi(z^*) - \phi^-(z^*)
= \int_{\mathbb{R}} K(z^* + c^- - t)
\bigl(f(\psi(t - \Delta)) - f(\phi^-(t))\bigr)\, dt.
\]
Since $K > 0$, $f$ is nondecreasing, and $\psi \not\equiv \phi^-$ with $\Delta \ne 0$,
the integrand on the right-hand side is nonnegative and positive on a set of positive measure, which contradicts
$\psi(z^*) = \phi^-(z^*)$.
Therefore, $c^- = c^+$.

\smallskip
\noindent
It then follows that $\phi^- = \phi^+(\cdot + z_0)$,
yielding a single nonincreasing profile $\phi \in A$ and a unique speed $c$ satisfying
\[
T_c(\phi) = \phi,
\qquad 
\lim_{z\to -\infty} \phi(z) = 1,
\quad
\lim_{z\to \infty} \phi(z) = 0.
\] 
\end{proof}

\subsection{Traveling Wave Speed}
Having established the existence of monotone traveling waves connecting the stable equilibria of the WIDE, we now derive an explicit integral expression for the wave propagation speed.  This relation links the nonlinear growth term $f(v)$, the spatial profile $\phi$, and the dispersal kernel $K$.  
For convenience, we focus on the case $\mu=0$, where the upper equilibrium satisfies $V_2^* = 1$. The same derivation extends directly to the general case with $\mu>0$ by replacing the upper state~$1$ with $V_2^*$.

\begin{thm}
Consider the WIDE model \cref{eq:model} with the {\it Wolbachia} growth function 
$f(v)$ given in \cref{eq:growth}, and assume $\mu = 0$. 
Let $\phi(x-ct)$ be a smooth monotone traveling wave profile satisfying 
$\ds \lim_{x\rightarrow-\infty}\phi(x)=1$, $\ds \lim_{x\rightarrow\infty}\phi(x)=0$, and $\phi'(x)<0$. 
Assume in addition that
\[
f \text{ is Lipschitz on }[0,1],\qquad \phi' \in L^1(\mathbb{R}),\qquad \int_{\mathbb{R}}|z|\,K(z)\,dz<\infty.
\]
and 
$$\lim_{x\to\pm\infty}\phi^{(n)}(x)=0 \quad \text{for all } n\ge1.$$
Then the propagation speed $c$ is given by
\begin{equation}\label{eq:speed}
c = \int_0^1 \frac{v-f(v)}{\phi'\!\big(\phi^{-1}(v)\big)}\,dv.
\end{equation}
\end{thm}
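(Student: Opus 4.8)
The plan is to start from the traveling-wave identity $\phi(z)=T_c(\phi)(z)=\int_{\mathbb R}K(z+c-y)f(\phi(y))\,dy$ and extract the speed by integrating a suitable quantity over all of $\mathbb R$. First I would rewrite the fixed-point equation as $\phi(z)-f(\phi(z))=\int_{\mathbb R}K(z+c-y)f(\phi(y))\,dy-f(\phi(z))$ and, using $\int K=1$, express the right-hand side as a convolution of $f\circ\phi$ against $K$ minus its own value, i.e. $\int_{\mathbb R}K(w)\big[f(\phi(z+c-w))-f(\phi(z))\big]\,dw$. The idea is to expand the bracket in $w$: writing $f(\phi(z+c-w))-f(\phi(z+c))$ and then $f(\phi(z+c))-f(\phi(z))$, and to exploit $\int_{\mathbb R}|w|K(w)\,dw<\infty$ together with the decay $\phi^{(n)}(\pm\infty)=0$ and $\phi'\in L^1$ to justify integrating the resulting expression in $z$ over $\mathbb R$ and swapping the order of integration (Fubini).

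The key computation is the following. Integrate the identity $\phi(z)-f(\phi(z))=\int_{\mathbb R}K(w)\big[f(\phi(z+c-w))-f(\phi(z))\big]\,dw$ over $z\in\mathbb R$. On the left we get $\int_{\mathbb R}\big(\phi(z)-f(\phi(z))\big)\,dz$; by the substitution $v=\phi(z)$, $dv=\phi'(z)\,dz$ (valid since $\phi$ is smooth and strictly decreasing from $1$ to $0$, so $z=\phi^{-1}(v)$), this becomes $\int_0^1\frac{v-f(v)}{\phi'(\phi^{-1}(v))}\,dv$, which is exactly the claimed right-hand side of \cref{eq:speed}. So it remains to show the $z$-integral of the convolution term equals $c$. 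Swapping integrals by Fubini (justified by the first-moment bound on $K$ and integrability of $f'\cdot\phi'$, since $f$ is Lipschitz and $\phi'\in L^1$), the inner integral is $\int_{\mathbb R}\big[f(\phi(z+c-w))-f(\phi(z))\big]\,dz$; by translation invariance of Lebesgue measure this equals $\big[\text{(shift by }c-w)-\text{(shift by }0)\big]$ applied to $\int f(\phi(\cdot))$, but that integral diverges, so one instead works with the antiderivative: $\int_{\mathbb R}\big[g(z+c-w)-g(z)\big]\,dz$ where $g=f\circ\phi$ has limits $1$ at $-\infty$ and $0$ at $+\infty$. For such a $g$, $\int_{\mathbb R}\big(g(z+a)-g(z)\big)\,dz = a\cdot\big(g(-\infty)-g(+\infty)\big)=a$ for any shift $a$ (this is the standard ``telescoping/area'' identity, proved by writing $g(z+a)-g(z)=\int_z^{z+a}g'(s)\,ds$ and applying Fubini, using $g'=f'(\phi)\phi'\in L^1$). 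Hence the inner integral equals $(c-w)$, and integrating against $K(w)\,dw$ gives $\int_{\mathbb R}(c-w)K(w)\,dw = c\int K - \int wK(w)\,dw$. The mean $\int wK\,dw$ need not vanish in general; here one either invokes that the four kernels in the paper are symmetric (so $\int wK=0$), or—more cleanly—absorbs the kernel mean into the definition of $c$ by noting the derivation already fixes $c$ uniquely, so \cref{eq:speed} holds with $c$ interpreted as the wave speed relative to the kernel's mean drift. Assuming the symmetric-kernel normalization used throughout, $\int wK\,dw=0$ and the inner contribution is exactly $c$, completing the identity $c=\int_0^1\frac{v-f(v)}{\phi'(\phi^{-1}(v))}\,dv$.

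The main obstacle is the rigorous justification of the interchange of integration and the handling of the (conditionally convergent) improper integral $\int_{\mathbb R}\big[g(z+c-w)-g(z)\big]\,dz$: one must be careful that this is not an absolutely convergent double integral in $(z,w)$ unless one first replaces the bracket by $\int_z^{z+c-w}g'(s)\,ds$ and uses $g'\in L^1(\mathbb R)$ to apply Fubini on $\mathbb R\times\mathbb R$ with the finite-mass measure $|g'(s)|\,ds\otimes K(w)\,dw$ after a further change of variables. A secondary subtlety is confirming that $g=f\circ\phi$ inherits the stated decay of derivatives and $L^1$ control of $g'$ from the Lipschitz/smoothness hypotheses on $f$ and $\phi$ and from $\phi'\in L^1$; this is routine but must be stated. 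Once these integrability points are settled, the computation is a short change of variables and the speed formula \cref{eq:speed} follows.
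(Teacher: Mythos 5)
Your derivation is correct in outline and reaches the right formula, but it is a genuinely different route from the paper's. The paper Taylor-expands $\phi(x-c)$ about $x$, integrates the series term by term (this is why the hypothesis $\phi^{(n)}(\pm\infty)=0$ for all $n$ appears), and then disposes of the convolution-minus-identity term with a formal Dirac-delta manipulation. You instead integrate the exact identity $\phi(z)-f(\phi(z))=\int_{\mathbb R}K(w)\bigl[g(z+c-w)-g(z)\bigr]\,dw$ with $g=f\circ\phi$, and evaluate the right side via the telescoping identity $\int_{\mathbb R}\bigl(g(\cdot+a)-g\bigr)=a\,\bigl(g(+\infty)-g(-\infty)\bigr)$, justified by Fubini against the finite measure $|g'(s)|\,ds\otimes K(w)\,dw$ using $g'=f'(\phi)\phi'\in L^1$ and $\int|w|K(w)\,dw<\infty$. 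This buys two real improvements: the Taylor series (and hence the decay hypothesis on all higher derivatives of $\phi$) is never needed, and you surface a hypothesis the paper silently uses --- the computation produces an extra term $\int wK(w)\,dw$, so the stated formula is valid only for mean-zero (e.g.\ symmetric) kernels, which happens to cover all four kernels in the paper but is not implied by the theorem's assumptions as written.

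There are, however, two sign slips in your key computations, which happen to cancel. First, with $v=\phi(z)$ decreasing from $1$ at $z=-\infty$ to $0$ at $z=+\infty$, the limits reverse and $\int_{\mathbb R}\bigl(\phi(z)-f(\phi(z))\bigr)\,dz=-\int_0^1\frac{v-f(v)}{\phi'(\phi^{-1}(v))}\,dv$, not $+$. Second, the telescoping identity gives $a\,\bigl(g(+\infty)-g(-\infty)\bigr)=-a$ for this decreasing $g$, not $+a$ (test it on a step dropping from $1$ to $0$: shifting left by $a>0$ removes area $a$). With both corrected, the integrated identity reads $-\int_0^1\frac{v-f(v)}{\phi'(\phi^{-1}(v))}\,dv=-\bigl(c-\int wK(w)\,dw\bigr)$, which again yields the theorem for mean-zero $K$. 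Fix the signs before writing this up; the structure of the argument is sound and, once repaired, more rigorous than the paper's version.
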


\begin{proof}
Substituting the traveling--wave form $V_t(x) = \phi(x-ct)$ into 
the WIDE~\cref{eq:model} yields the steady profile equation
\begin{equation}\label{eq:profile_eq}
    \phi(x-c) 
    = \int_{\mathbb R} K(x-y)\, f(\phi(y))\, dy.
\end{equation}
Since $\phi$ is smooth, we expand $\phi(x-c)$ in a Taylor series around $x$:
\[
\phi(x-c) 
    = \phi(x) - c\phi'(x) + \sum_{n=2}^{\infty}\frac{(-c)^n}{n!}\phi^{(n)}(x).
\]
Substituting this into \cref{eq:profile_eq} and simplifying gives
\[
- c \phi'(x)
+ \sum_{n=2}^{\infty}\frac{(-c)^n}{n!}\phi^{(n)}(x)
= \int_{\mathbb R} K(x-y)\, f(\phi(y))\, dy - \phi(x).
\]
Integrating over $\mathbb{R}$ and using the boundary conditions 
$\ds \lim_{x\rightarrow-\infty}\phi(x)=1$ and $\ds \lim_{x\rightarrow\infty}\phi(x)=0$, we obtain
\[
\int_{\mathbb R} -c\phi'(x)\, dx = 
    -c\,[\phi(+\infty)-\phi(-\infty)] = c.
\]
The higher derivative terms vanish upon integration since 
$\phi^{(n)}\rightarrow0$ as $x\rightarrow\pm\infty$ for $n\ge1$. 
Thus,
\[
c = \int_{\mathbb R}
    \left(\int_{\mathbb R} K(x-y)\, f(\phi(y))\, dy - \phi(x)\right) dx.
\]
To simplify this expression, we note that the difference
\[
\int_{\mathbb R}
    \left(\int_{\mathbb R} K(x-y)\, f(\phi(y))\, dy - f(\phi(x))\right) dx
\]
is zero whenever $K$ is normalized, $f$ is Lipschitz, $\phi' \in L^1$, 
and $\int |z|K(z)\,dz < \infty$.  
Indeed, exchanging the order of integration gives
\[
\iint_{_{\mathbb R^2}} (K(x-y)-\delta(x-y)) f(\phi(y))\, dx\, dy
    = \int_{\mathbb R} (1-1)f(\phi(y))\, dy = 0,
\]
where $\delta(x)$ is the Dirac delta function. Hence
\[
c = \int_{\mathbb R} (f(\phi(x)) - \phi(x))\, dx.
\]
Since $\phi$ is smooth, strictly decreasing ($\phi'<0$), and satisfies $\ds \lim_{x\rightarrow-\infty}\phi(x)=1$ and $\ds \lim_{x\rightarrow\infty}\phi(x)=0$, 
it defines a $C^1$ bijection $\phi:\mathbb{R}\to(0,1)$ with inverse $x=\phi^{-1}(v)$. 
Using the change of variable $v=\phi(x)$, $dv=\phi'(x)\,dx$, we obtain the exact identity
\[
c = \int_0^1 \frac{v-f(v)}{\phi'\!\big(\phi^{-1}(v)\big)}\,dv.
\]
\end{proof}

\begin{remark}[Direction criterion]
Let $s(v):=1/\phi'(\phi^{-1}(v))<0$, so that
\[
c \;=\; \int_0^1 s(v)\,(v-f(v))\,dv.
\]
Let $S\subset(0,1)$ denote the transition region where $|v-f(v)|$ is concentrated, and assume $s$ varies slowly on $S$ while $|v-f(v)|$ is small off $S$. Let $\ds\bar{s}=\frac{1}{|S|}\int_Ss(v)\,dv$. Then
\[
c \;=\; \bar{s} \!\int_0^1 (v-f(v))\,dv \;+\; \int_S \!\big(s(v)-\bar{s}\big)(v-f(v))\,dv
      \;+\; \int_{(0,1)\setminus S} \!\big(s(v)-\bar{s}\big)(v-f(v))\,dv,
\]
and
\begin{equation*}
%\label{eq:const-weight-error}
\big|\,c - \bar{s} \!\int_0^1 (v-f(v))\,dv\big|
\;\le\; \sup_{v\in S}\!|s(v)-\bar{s}|\,
        \int_S |v-f(v)|dv
      \;+\; \big(|\bar{s}|+\|s\|_{L^\infty}\big)\!\int_{(0,1)\setminus S}\! |v-f(v)|dv.
\end{equation*}
In particular, if $s$ is nearly constant on $S$ and $|v-f(v)|$ is negligible off $S$, then
\[
c \;\approx\; \bar{s} \!\int_0^1 (v-f(v))\,dv.
\]
Hence, the sign of $c$ is primarily determined by the sign of the area 
$\int_0^1 (v-f(v))\,dv$, with opposite sign to $\bar s$.  
Specifically,
\begin{equation}\label{eq:sign}
\operatorname{sign}(c)\approx 
-\operatorname{sign}\!\left(\int_0^1 (v-f(v))\,dv\right).    
\end{equation}
When the positive and negative regions of $(v-f(v))$ balance exactly,
$\int_0^1 (v-f(v))\,dv=0$, this approximation gives $c\approx0$, 
corresponding to the \emph{equal–area condition} often observed 
numerically at the transition between left– and right–moving waves.  
This criterion is consistent with the direction–selection rule in 
bistable reaction–diffusion systems and agrees with the numerical 
trends reported in \cref{sec:wavespeed}.
\end{remark}

% \begin{remark}[Heavy-tailed kernels]
% When $\int |z|K(z)\,dz$ diverges (e.g., for the Cauchy kernel), the boundary-leakage term generally does not vanish under absolute integration. In this case, one may retain the correction  
% \[
% E_K[\phi] = \lim_{R\to\infty}\int_{-R}^R \big((K*\phi)(x) - \phi(x)\big)\,dx
% \]
% in the identity  
% \[
% c = \int_{\mathbb{R}} (\phi-f(\phi))\,dx + E_K[\phi].
% \]
% For the Cauchy kernel, our numerical simulations indicate that the approximate formula~\cref{eq:sign} still accurately captures the transition point where the wave speed changes sign (see~\cref{fig:heatmap}).
% \end{remark}

\begin{remark}
If we consider the general case $0\le \mu \le 1$, then the wave speed estimate is given by
$$c=\int_0^{V_2^*} \frac{v-f(v)}{\phi'\!\big(\phi^{-1}(v)\big)}\,dv,$$
where $V_2^*$ is the stable fixed point, given in \cref{eq:V_mu}. Note that $V_2^*$ replaces the upper limit in ~\cref{eq:speed}. 
\end{remark}

\subsection{Critical Bubble: Threshold for Establishment of Traveling Waves given Compact Support}\label{sec:bubble}

In models with Allee effects, there exists a threshold density below which \textit{Wolbachia} cannot establish because an insufficient number of infected individuals are present to sustain reproduction. In a spatially 
homogeneous setting, this Allee threshold is given by the unstable fixed point $V_1^*$, the minimum infection level required for persistence. For localized field releases with compact spatial support, however, a more relevant threshold must account for both the height and spatial extent of the initial infection. Following \cite{BartonTurelli2011,qu2022modeling}, we refer to this space-dependent nonhomogeneous threshold as the \textit{critical bubble}.

The critical bubble is a dome-shaped profile, with infection highest at the release center and decaying with distance. If the initial condition exceeds this profile, the infection establishes locally, approaching the higher stable fixed point $V_2^*$, and then propagates outward as a traveling wave in 1-D. Conversely, if the initial condition falls below the profile, the infection collapses and is eventually lost. In the present model, the critical bubble corresponds to a nontrivial unstable steady state, $V_c^*(x)$, satisfying
\begin{equation}
V_c^*(x) = \int_{\mathbb R} K(x-y) f(V_c^*(y)) \mathrm{d}y\;.
\end{equation}

We numerically approximate this threshold using the bisection procedure of \cite{qu2022modeling}. Starting from a step initial condition (uniform infection within a finite spatial region), we vary the initial height at the release center. The critical height is identified as the value for which the solution neither collapses (\cref{fig:bubble}, left panel) nor establishes sustained persistence (middle panel) over a long observation window. The spatial profile observed at this boundary, after the initial transient phase, approximates the $V_c^*(x)$, critical bubble (\cref{fig:bubble}, right panel). The numerical accuracy can be improved by a finer spatial mesh and a longer observation period.

\begin{figure}[ht]
\centering
\includegraphics[width=0.32\textwidth]{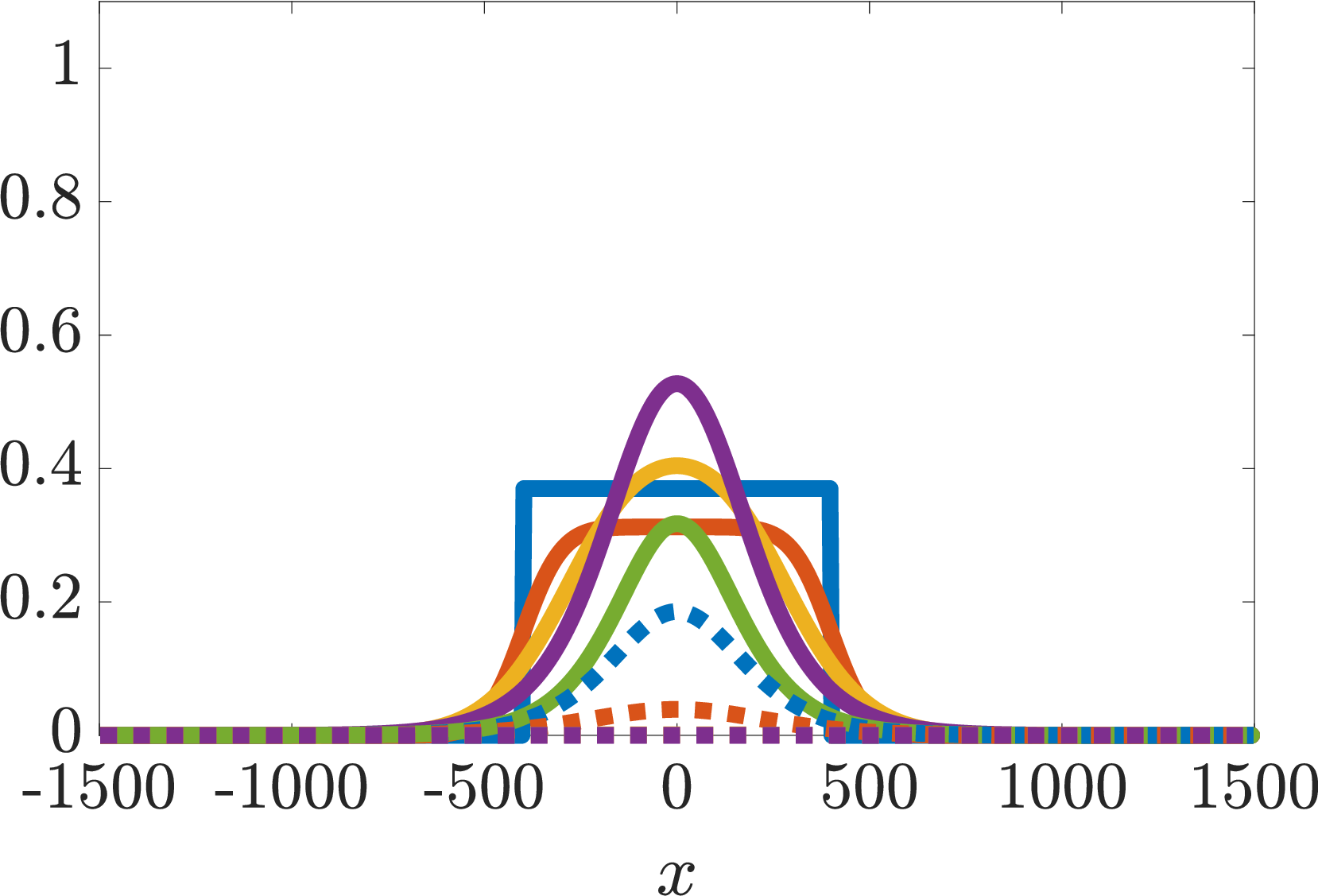}
\includegraphics[width=0.32\textwidth]{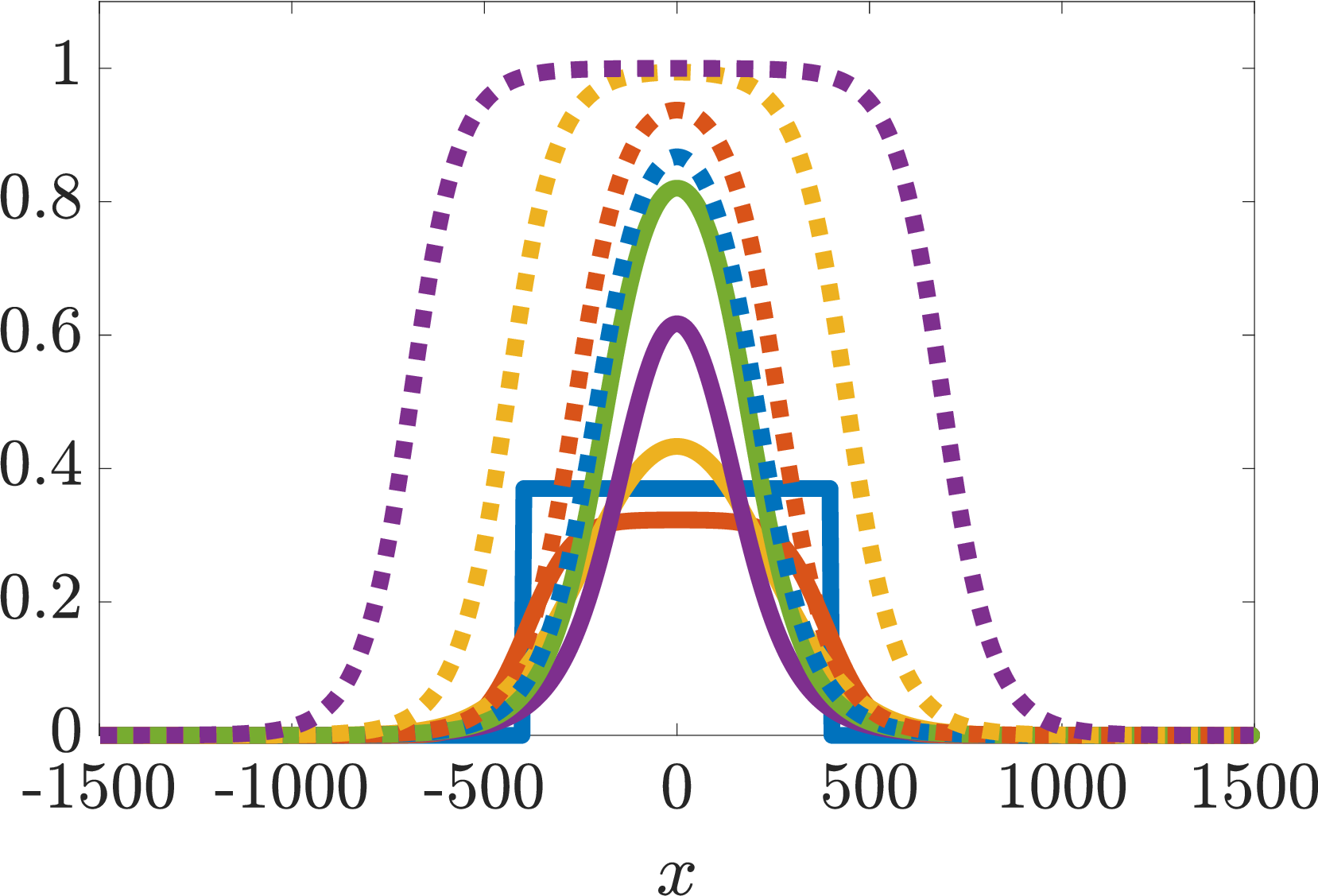}
\includegraphics[width=0.32\textwidth]{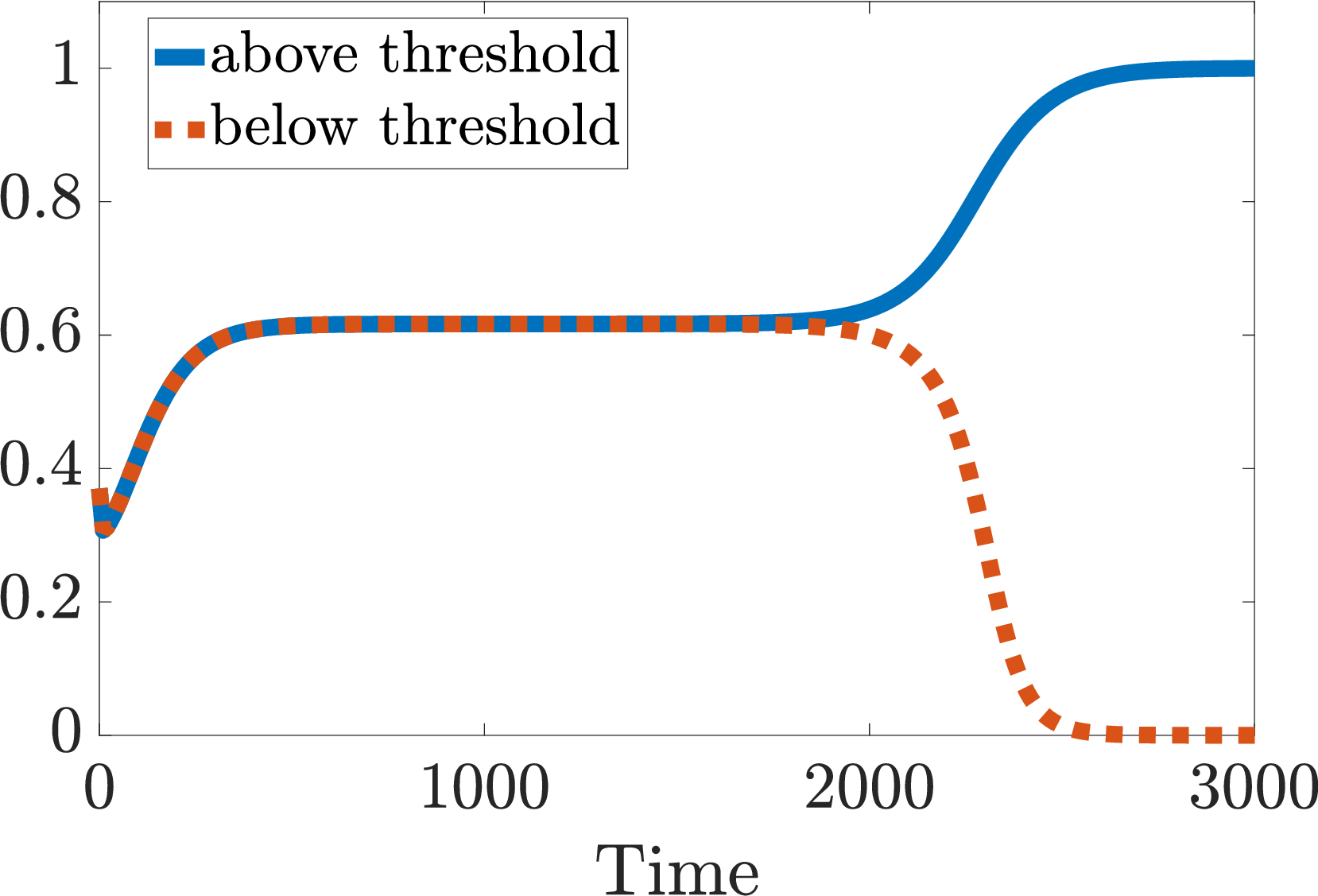}
\caption{Threshold behavior for \W ~establishment in 1-D. Left: Infection decays to extinction when the release is below the threshold. Middle: Infection establishes locally and generates an outward-propagating traveling wave when the initial release exceeds the threshold. Right: Time series of the infection fraction at the release center, $x=0$, for the left and middle panels (near the threshold), illustrating the bistable dynamics. An approximation of the critical bubble is achieved around $t\approx 1000$.}
\label{fig:bubble}
\end{figure}

\section{Numerical Results}\label{sims}
We now present numerical results that complement the analytical results (\cref{TraWaves}) on traveling waves and threshold behavior. Specifically, we implement simulations of the model using different dispersal kernels to examine the formation and propagation of traveling waves, as well as the associated threshold conditions. It is worth noting that, for heavy-tailed kernels such as the Cauchy, some of the analytical results may not strictly apply; nevertheless, we include the numerical findings here for comparison and completeness.

\subsection{Parametrization and Numerical Methods}
We consider four kernels that span short- to long-range dispersal: Gaussian, Laplace, Exponential square-root, and Cauchy. To compare the wave dynamics across kernels, we match their median absolute deviation (MAD) to a target dispersal distance $d$, which represents the lifetime distance traveled by female mosquitoes \cite{kay1998aedes,russell2005markreleaserecapture}. Kernel parameter values are summarized in \cref{tab:kernels}, and the 1-D kernel profiles are shown in \cref{fig:All_kernels} for $d = 200m$.
\begin{table}[H]
\centering
\begin{tabular}{lllll}
\toprule
kernels &  MAD (1-D) & Param. (1-D) & MAD (2-D) & Param. (2-D) \\
\midrule
Gaussian &  $ \ds \sigma \, \sqrt{2} \, \text{erf}^{-1}\left(\frac{1}{2}\right)$ & $\ds \sigma = \frac{d}{\sqrt{2}\text{erf}^{-1}\left(\frac{1}{2}\right)}$ & $\sigma \sqrt{2 \ln 2}$ & $\ds \sigma = \frac{d}{\sqrt{2 \ln 2}}$\\ 
\midrule
Laplace &$b \ln{2}$ &  $\ds b = \frac{d}{\ln{2}}$  & $\ds -b \, w$ & $\ds b = -\frac{d}{w}$\\
\midrule 
Exp. Sqrt & $\ds \frac{1}{\alpha^2}w^2 $ & 
$\ds \alpha = \frac{-w}{\sqrt{d}}$ 
& - & $\alpha \approx 0.259654 $\\ 
\midrule
%Student's T \newline ($\nu=1.5$) & $\ds \frac{\sqrt{\nu}\,\lambda\, \Gamma\!\left(\tfrac{\nu - 1}{2}\right)}{\sqrt{\pi}\,\Gamma\!\left(\tfrac{\nu}{2}\right)}$ &  $\ds \lambda = \frac{\sqrt{\pi}\, \Gamma\!\left(\frac{\nu}{2}\right) d}{\sqrt{\nu}\,\Gamma\!\left(\frac{\nu - 1}{2}\right)} $ & $\ds \frac{\sqrt{\pi\nu}\, \lambda \, \Gamma\left(\frac{\nu - 1}{2}\right)}{2 \Gamma\left(\tfrac{\nu}{2}\right)}$  & $\ds \lambda=\frac{2\,\Gamma\!\left(\tfrac{\nu}{2}\right)d}{ \sqrt{\pi\nu}\,\Gamma\left(\frac{\nu - 1}{2}\right)}$\\
%\midrule 
Cauchy & $\beta$  & $\beta=d$ &  $\sqrt{3}\beta$ &  $\beta=d/\sqrt{3}$ \\
\bottomrule
\end{tabular}
\caption{Summary of dispersal kernels, including the median absolute deviation of the distribution (MAD) and the kernel parameter expressed in terms of the mosquito's flight range $d$. The probability density functions for both the 1-D and 2-D kernels are provided in the \cref{tab:kernels_PDFs}. $^* w=1 + W_{-1}\!\left(-\frac{1}{2e}\right)$, where $W_{-1}(\cdot)$ denotes the Lambert W function on branch $-1$. For the 2-D Exponential square-root distribution, the closed form for the MAD cannot be obtained--it satisfies the following nonlinear equation: $\ds e^{-\sqrt{d}\,\alpha} \left( 6 + \sqrt{d}\,\alpha \left( 6 + 3\sqrt{d}\,\alpha + d\alpha^2 \right) \right) = 3$. The numerical approximation is presented for $d=200$.}
\label{tab:kernels}
\end{table}

\begin{figure}[ht!] 
\centering
\includegraphics[width=0.5\textwidth]{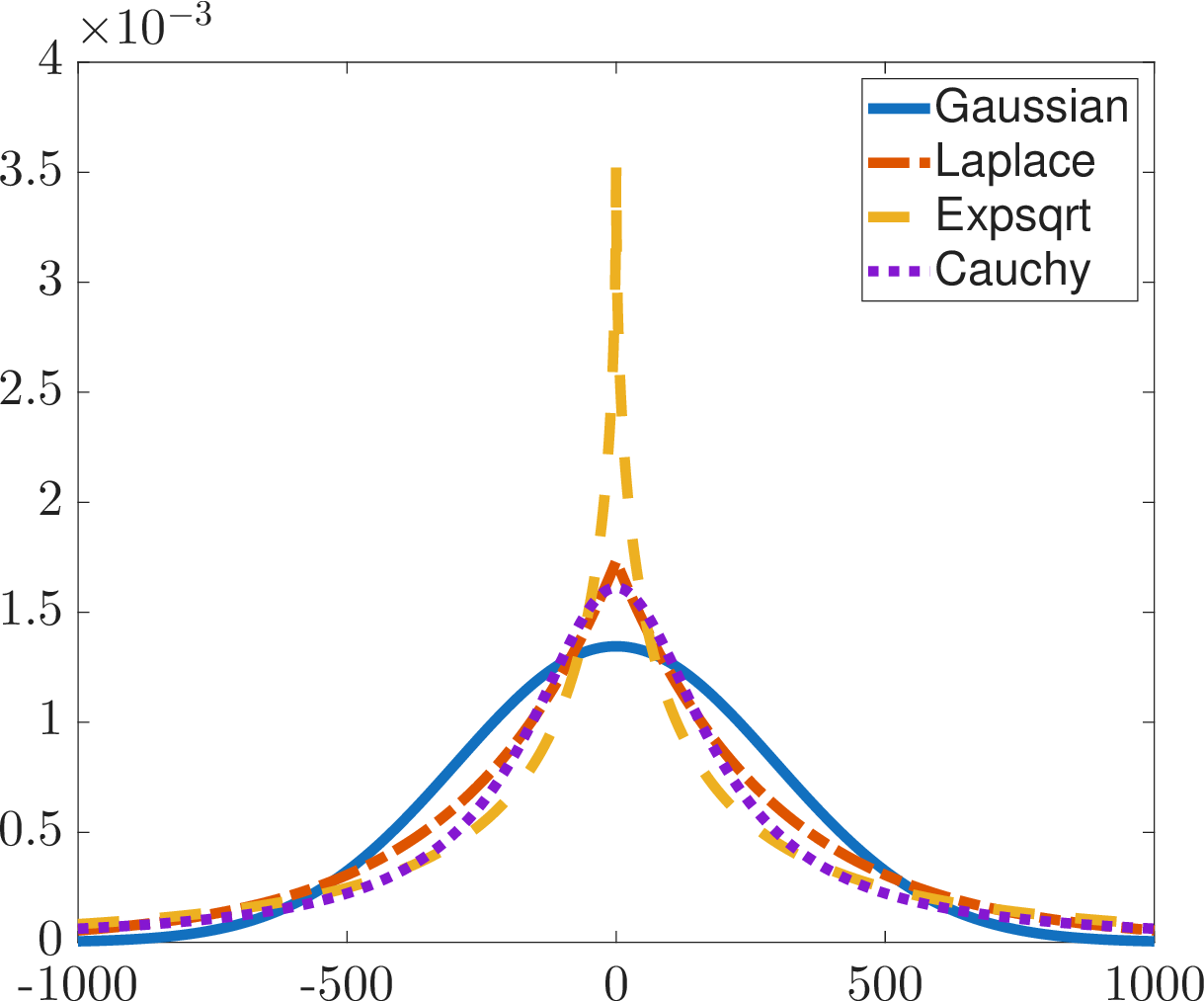} 
\caption{Dispersal kernels used in the simulations with controlled median absolute deviation $d = 200 m$. Kernel definitions and parameters are given in \cref{tab:kernels,tab:kernels_PDFs}.}
\label{fig:All_kernels}
\end{figure}

We simulate the WIDE model \cref{eq:model,eq:growth} on the spatial domain \( x \in [-L, L] \) using a spatial step size \( \Delta x \) and time step \( \Delta t = 1 \). At each time step, the convolution term in \cref{eq:model} is evaluated efficiently using the Fast Fourier Transform (FFT). Specifically, taking the Fourier transform of \cref{eq:model} yields
\begin{equation}
\mathcal{F}\{V_{t+1}\} = \mathcal{F}\left\{\int_{\mathbb R} K(x-y) f\big(V_t(y)\big) \,\mathrm{d}y\right\} 
= \mathcal{F}\{K(x)\}\,\mathcal{F}\{f(V_t(x))\},
\end{equation}
so that the convolution of the kernel \(K\) and growth term \(f(V_t)\) reduces to a simple pointwise product in Fourier space. Applying the inverse Fourier transform then gives
\begin{equation}
V_{t+1} = \mathcal{F}^{-1}\big\{\mathcal{F}\{K(x)\}\,\mathcal{F}\{f(V_t(x))\}\big\}.
\end{equation}

This spectral approach offers significant advantages over the direct numerical quadrature methods for the computation of the convolution integral. A straightforward discretization of the convolution integral requires \(O(N^2)\) operations for \(N\) spatial grid points, which becomes computationally expensive for large domains. In contrast, the FFT algorithms reduce the complexity to \(O(N\log N)\). Moreover, the FFT-based methods are more accurate for non-smooth kernels--such as the exponential square-root kernel--where standard numerical quadrature may suffer from poor convergence.
%In our numerical simulations, we use \(\Delta x = 1\) for spatial discretization, which is sufficiently accurate for numerical convergence.

The algorithm is further generalized for a radially symmetric 2-D spatial domain, where the convolution is computed using the Hankel Transform to efficiently evaluate the radial convolution. Additional details are provided in appendix \ref{sec:2Dnum}.

\subsection{Traveling Wave Solution Profiles}\label{sec:wavesolu}
We consider a 1-D spatial domain:
$x \in [-5\times 10^4, 5\times 10^4]$ with an initial infection frequency profile given by the step function:
\begin{equation*}
V_0(x) = 
\begin{cases}
v_0, & \text{if } |x| < L \\
0, & \text{otherwise}
\end{cases}\;,
\end{equation*}
where $v_0=1$, and $L=1000$. The corresponding numerical solutions are shown in \cref{fig:comp_solu_3D,fig:comp_solu_time}. After an initial transient associated with local establishment of \W, the solutions evolve into traveling waves that propagate outward with approximately constant speed (\cref{fig:comp_solu_3D}). Additionally, for the chosen parameter values, where the kernel MAD is matched, the Cauchy kernel produces the fattest wave, followed by the Exponential square-root kernel, Laplace kernel, and then the Gaussian kernel. In the snapshots of the solutions (\cref{fig:comp_solu_time}), we further noticed that the Exponential square-root wavefront is slightly faster at the beginning and is then caught up by the Cauchy wavefront around $t = 50$. The Cauchy wavefront is slightly flatter than the others. 

\begin{figure}[ht!]
\centering
\includegraphics[width=\linewidth]{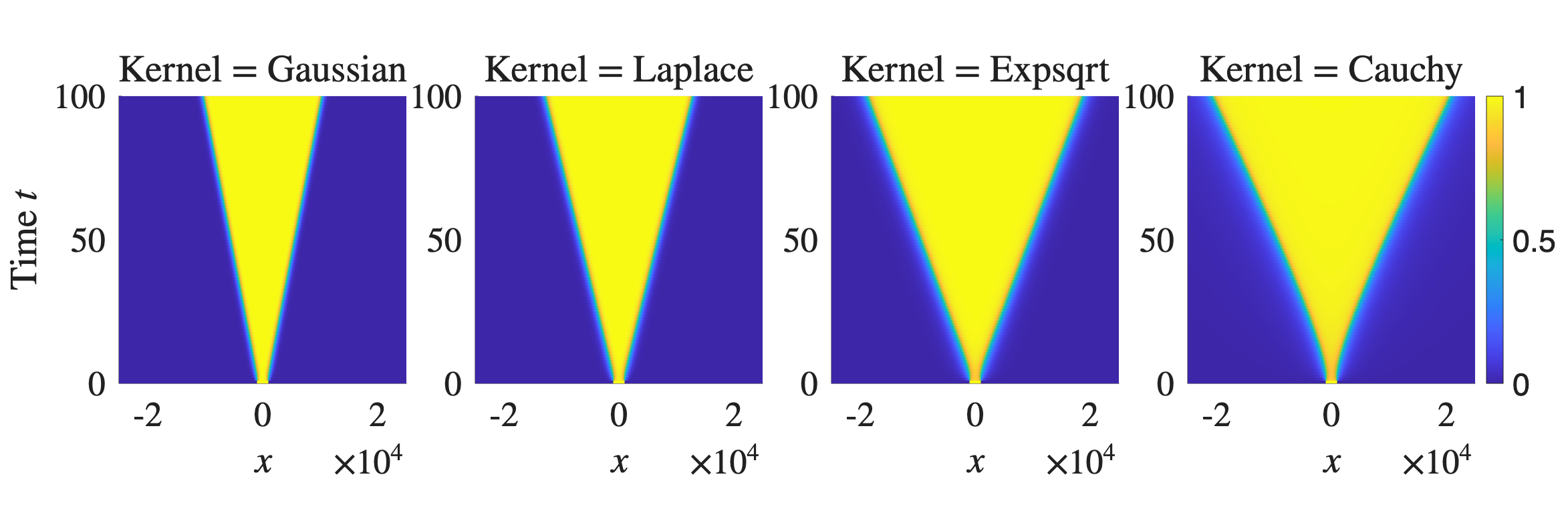}
\caption{Space–time dynamics of solutions using, from left to right, Gaussian, Laplace, Exponential square-root, and Cauchy kernels. Kernel parameters are given in \cref{tab:kernels}. Color indicates infection fraction from 0 (blue) to 1 (yellow). Traveling-wave behavior is observed for all kernels; propagation speeds at final time rank Gaussian $<$ Laplace $<$ Exponential square-root $<$ Cauchy under chosen parameter settings.}
\label{fig:comp_solu_3D}
\end{figure}

\begin{figure}[ht!]
\centering
\includegraphics[width=0.7\linewidth]{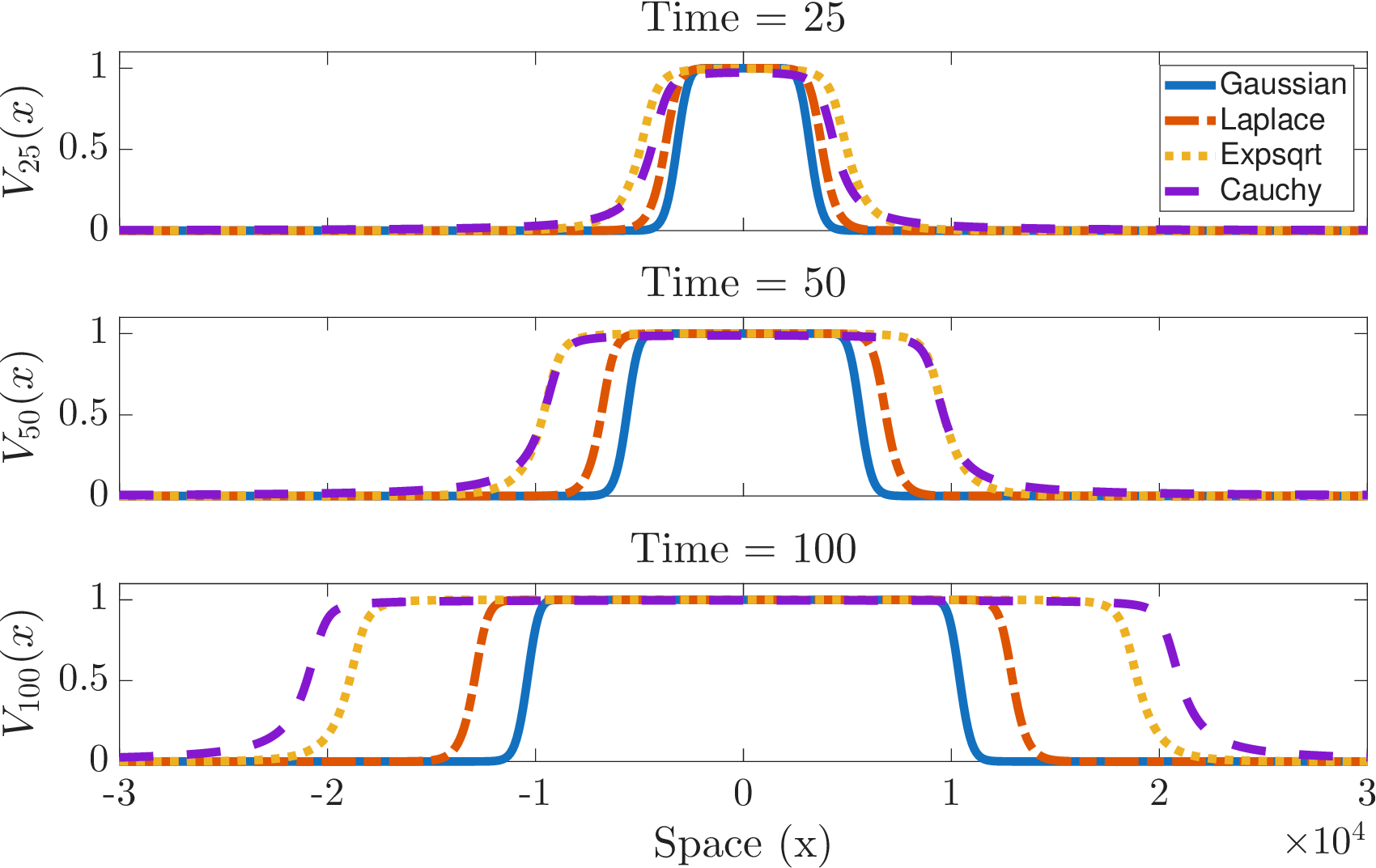}
\caption{Solution profiles at $t=25,\,50,\,100$ for all the kernels. Same numerical configurations as \cref{fig:comp_solu_3D}. The Exponential square-root wavefront is caught up by the Cauchy wavefront around $t\approx 50$. }
\label{fig:comp_solu_time} 
\end{figure}

\subsection{Numerical Estimate of Traveling Wave Speeds} \label{sec:wavespeed}
\begin{proposition}
\label{lem:lemma_Wave_speed}
Suppose the traveling  solution is  $V_t(x)=\phi(x-ct)$, where $\phi(-\infty)=V_2^*$, $\phi(\infty)=0$, and $c>0$. Then the wave speed is given as 
\[
c=\lim_{t\rightarrow\infty}\frac{1}{V_2^*}\int_0^\infty V_{t+1}(x)-V_t(x)\,dx\;,
\]
where $V_2^*$ is given in \cref{eq:V_mu}.
\end{proposition}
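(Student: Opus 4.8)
The plan is to use the rigid translation structure of the wave. Since $V_t(x)=\phi(x-ct)$ with $\phi$ non-increasing and $c>0$, each generation shifts the whole profile a distance $c$ to the right, so the temporal increment $V_{t+1}(x)-V_t(x)=\phi(x-ct-c)-\phi(x-ct)$ is pointwise nonnegative and concentrated near the wavefront. Integrating this increment over the half-line $[0,\infty)$ should capture exactly the ``mass'' the front sweeps through in one step, which as $t\to\infty$ equals $c$ times the plateau height $V_2^*$ the wave leaves behind it.

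First I would substitute the ansatz and change variables $\xi=x-ct$, so that
\[
\int_0^\infty\!\big(V_{t+1}(x)-V_t(x)\big)\,dx=\int_{-ct}^{\infty}\!\big(\phi(\xi-c)-\phi(\xi)\big)\,d\xi.
\]
Because heavy-tailed kernels (e.g.\ Cauchy) may produce a profile $\phi$ with a slowly decaying right tail, I would avoid splitting this into two separate (possibly infinite) integrals. Instead, for any finite $R$ the substitution $u=\xi-c$ in the first term gives the exact identity
\[
\int_{-ct}^{R}\!\big(\phi(\xi-c)-\phi(\xi)\big)\,d\xi=\int_{-ct-c}^{-ct}\!\phi(u)\,du-\int_{R-c}^{R}\!\phi(u)\,du .
\]
Since $\phi$ is non-increasing with $\sup\phi=V_2^*$, the left-hand side is non-decreasing in $R$ and bounded above by $cV_2^*$, hence convergent; letting $R\to\infty$ and using $\phi(+\infty)=0$ to kill the boundary term yields $\int_0^\infty(V_{t+1}-V_t)\,dx=\int_{-ct-c}^{-ct}\phi(u)\,du$ for every $t$.

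Then I would let $t\to\infty$. The interval $[-ct-c,-ct]$ has fixed length $c$ and recedes to $-\infty$, where $\phi\to V_2^*$; so given $\varepsilon>0$, for $t$ large enough one has $|\phi(u)-V_2^*|<\varepsilon$ on this interval, whence $\big|\int_{-ct-c}^{-ct}\phi(u)\,du-cV_2^*\big|\le c\varepsilon$. Therefore the integral converges to $cV_2^*$, and dividing by $V_2^*$ gives the stated formula.

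The step I expect to require the most care is the bookkeeping of the improper integrals: $\int_0^\infty V_t(x)\,dx$ may itself diverge for fat-tailed profiles, so one cannot subtract two infinite quantities and must instead work throughout with the difference $\phi(\xi-c)-\phi(\xi)$, using monotonicity of $\phi$ both to justify convergence of the half-line integral and to make the boundary term $\int_{R-c}^{R}\phi$ vanish. Everything else is an elementary change of variables together with the $\varepsilon$–$\delta$ characterization of the limit $\phi(-\infty)=V_2^*$.
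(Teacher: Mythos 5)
Your proof follows essentially the same route as the paper's: substitute the ansatz, change variables $\xi=x-ct$ to reduce $\int_0^\infty (V_{t+1}-V_t)\,dx$ to $\int_{-ct-c}^{-ct}\phi(u)\,du$, and pass to the limit using $\phi(-\infty)=V_2^*$. The only differences are refinements in rigor---you truncate at a finite $R$ to avoid the paper's splitting into two possibly divergent integrals (a real concern for slowly decaying profiles), and you replace the paper's mean value theorem for integrals (which needs $\phi$ continuous) with a direct $\varepsilon$-estimate using only monotonicity and the limit at $-\infty$---but the substance of the argument is identical.
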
 

\begin{proof} 
Consider $V_t(x) = \phi(x - ct)$, then $V_{t+1}(x) = \phi(x - c(t+1)) = \phi(x - ct - c).$
Substituting \( z = x - ct \), then
$$
\begin{aligned}
\int_0^\infty  V_{t+1}(x) - V_t(x) \, dx &= \int_{-ct}^{\infty} \phi(z - c) - \phi(z) \, dz = \left( \int_{-ct - c}^{\infty} \phi(\eta) \, d\eta \right) - \left( \int_{-ct}^{\infty} \phi(\eta) \, d\eta \right)\\
&= \int_{-ct - c}^{-ct} \phi(\eta) \, d\eta= c\phi(\xi),
\end{aligned}
$$
for some $\xi\in(-ct-c,-ct)$. As $t\rightarrow\infty$, we have  $\phi(\xi)\rightarrow\phi(-\infty)=V_2^*$ and 
\[
\lim_{t\rightarrow\infty}\int_0^\infty  V_{t+1}(x) - V_t(x)\, dx=cV_2^*\;.
\]
\end{proof}
\begin{remark}
In numerical simulations, we approximate the speed of the wavefront using 
$$
c(t)\approx\frac{1}{V_2^*}\int_0^\infty V_{t+1}(x)-V_t(x)\,dx,
$$
which will approach the traveling wave speed in time.
\end{remark}

The numerical wave velocities for all the kernels are shown in \cref{fig:comp_solu_velocity_MAD} (left). 
For the Gaussian, Laplace, and Exponential square-root kernels, the solutions rapidly evolve into traveling waves with nearly constant speeds of approximately
$c_{\mathrm{gaus}} \approx 96$, $c_{\mathrm{lap}} \approx 122$, and $c_{\mathrm{expsqrt}} \approx 186$, respectively. In contrast, following a longer transient phase, the velocity for the Cauchy wavefront continues to increase until about $t \approx 100$, after which it stabilizes at $c_{\mathrm{cauchy}} \approx 234$. These results confirm the propagation-speed ranking observed in the solution
profiles (\cref{sec:wavesolu}): $c_{\mathrm{gaus}} \;<\; c_{\mathrm{lap}} <\; c_{\mathrm{expsqrt}}\;<\; c_{\mathrm{cauchy}}$, when the MAD is controlled. 
\begin{figure}[H]
\centering
\includegraphics[width=0.48\textwidth]{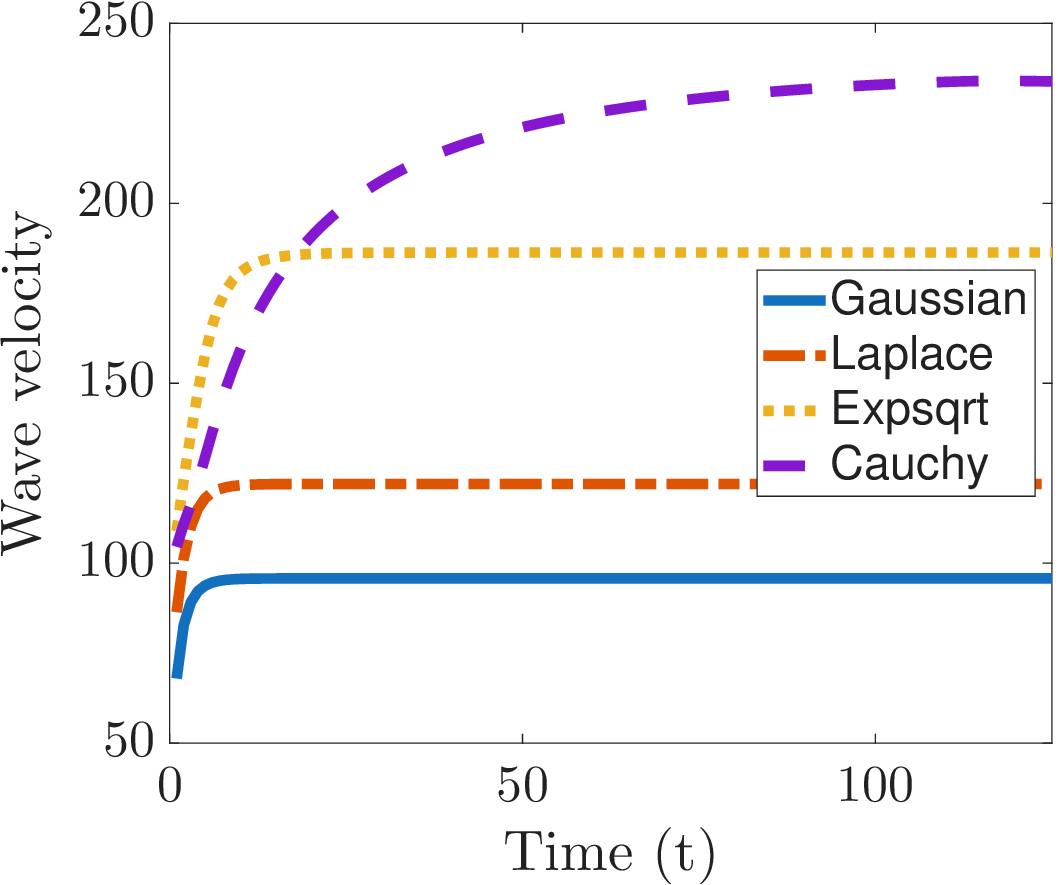}\hfill \includegraphics[width=0.48\textwidth]{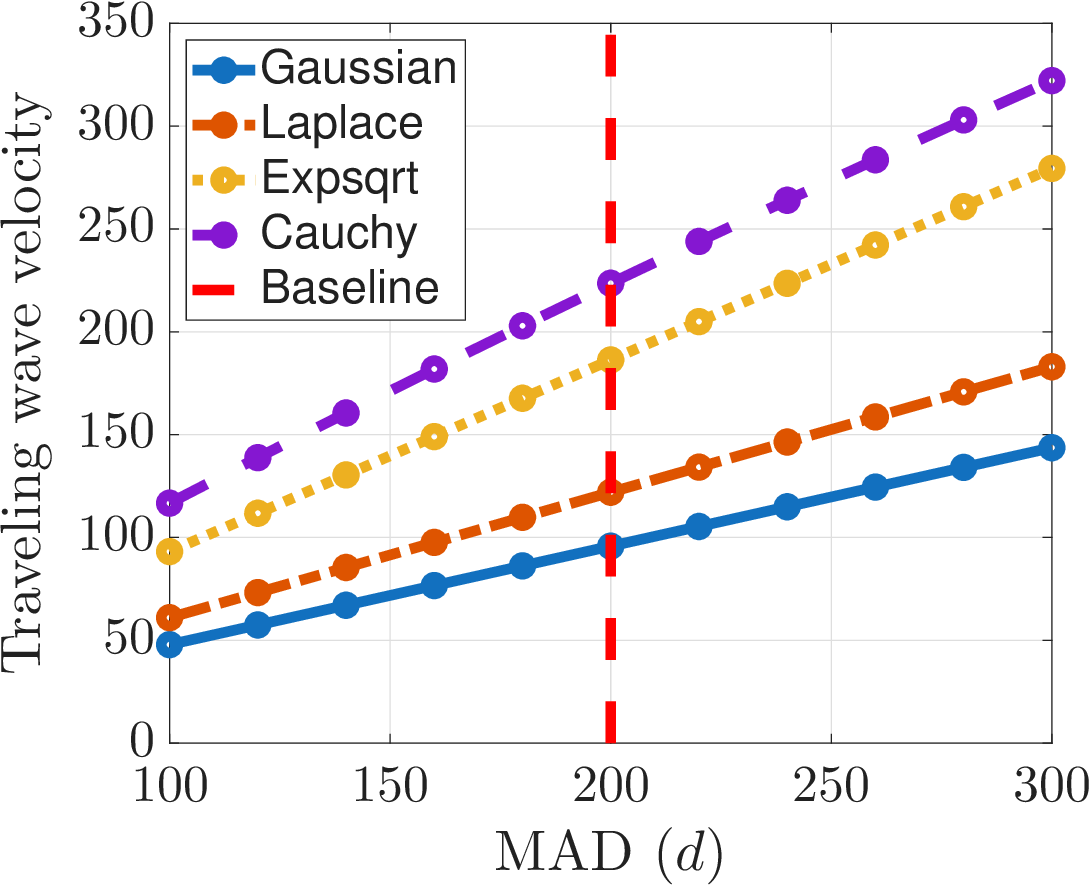}
\caption{Left: Time series of numerical wave velocities for Gaussian, Laplace, Exponential square-root, and Cauchy kernels at baseline. Right: Traveling wave velocity for varying MAD ($d$) for the four kernels. The vertical red dashed line indicates the baseline scenario seen in the left panel. The wave speed increases linearly in MAD.}
\label{fig:comp_solu_velocity_MAD} 
\end{figure}

We further numerically examine the factors that impact the wave speed and evaluate the accuracy of the analytical approximation for the sign changes in \cref{eq:sign}. Note that the simulation results below use a Riemann initial condition $V(x < 0) = V_2^*$, $V(x > 0) = 0$ for numerical efficiency.

\begin{figure}[ht!]
\centering
\includegraphics[ width=\textwidth]{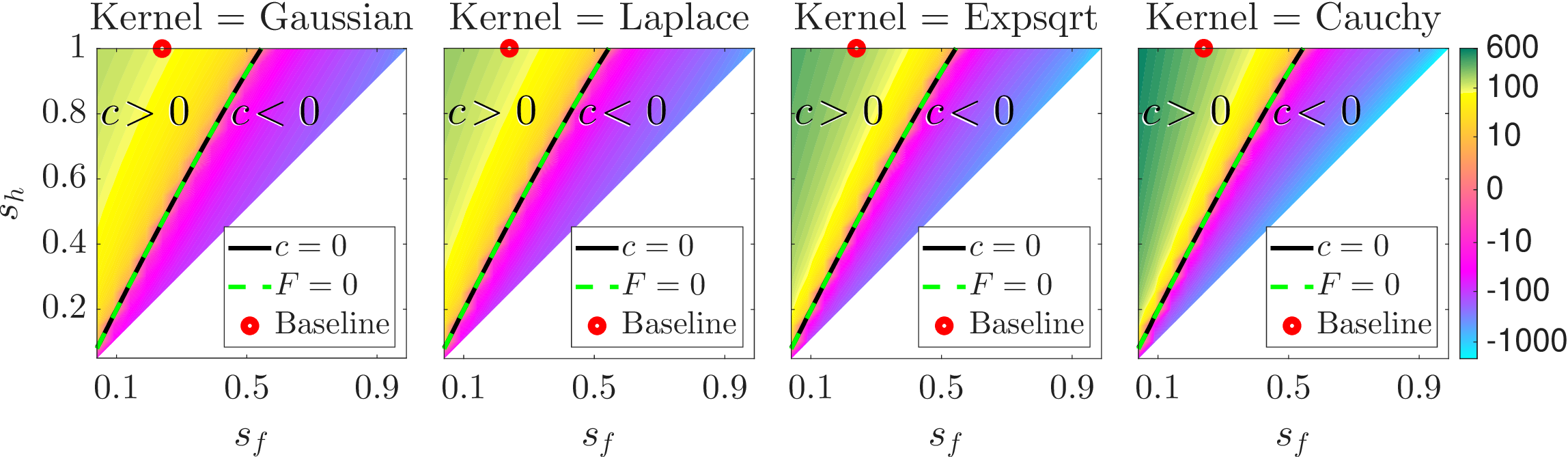}
\caption{Heatmaps of signed traveling wave speed, estimated numerically, for varying $s_f$ (fitness cost) and $s_h$ (CI intensity) for $s_f<s_h$. The solid black curve indicates $c=0$ based on the numerical estimate, and the green dashed line is the analytical prediction 
$F = \int_0^{V_2^*} (v-f(v))\,\mathrm{d}v = 0$.  The baseline scenario is marked by the red circle. The white space on the heatmap indicates the absence of the unstable fixed point due to $V_1^*=s_f/s_h>1$. }
\label{fig:heatmap}
\end{figure}

As shown in \cref{fig:heatmap}, the sign of the analytical estimate, determined by $F = \int_0^{V_2^*} (v-f(v))\,dv$ (green dashed curve), predicts accurately the direction of the propagation in the numerical simulations (black solid curve) across all four kernels. The wave switches from invasion ($c>0$, right propagation) to retreat ($c<0$, left propagation) at $c=0$, which is in close agreement with $F=0$. For all kernels, when fitness cost ($s_f$) increases, the wave velocity slows down and eventually changes its direction from invasion to retreat, making it harder for the \W~ to spread into the nearby region. Opposite trend is observed for $s_h$, that is, when CI intensity increases, the wave velocity increases, facilitating the spread of the \W~ invasion to the nearby region. These are consistent with the biological implications, where \W~ strain with lower fitness cost and high CI is desired for population replacement.
Additionally, when comparing across the four kernels with the same MAD, the fat-tailed Cauchy kernel leads to the fastest wave invasion velocity at the baseline parametrization (red circles in \cref{fig:heatmap}) and is the most sensitive to the changes in $s_f$ and $s_h$ parameters. 

When \W~parameters are fixed at baseline, the traveling wave velocity increases linearly when MAD increases for all kernels (\cref{fig:comp_solu_velocity_MAD} right). This further suggests that the estimate \cref{eq:speed} can be expressed as
$$
c \approx C\cdot \text{MAD} \cdot \int_0^1 (v-f(v))\,dv,
$$
where the coefficient $C$ is independent of the MAD of the spatial kernel.

\subsection{Comparison of Critical Bubbles} 

\Cref{fig:bubbles} presents the critical bubbles as thresholds for \W~ invasion, using the method outlined in \cref{sec:bubble}. While the overall shapes are similar, the Cauchy kernel produces the widest and tallest critical bubble, with slower spatial decay in the tails. In contrast, the Gaussian and Exponential square-root kernels produce narrower profiles with comparable peak heights and fast decay away from the center. These differences reflect the influence of kernel tail behavior: the heavy-tailed Cauchy kernel induces greater long-range dispersal (or larger diffusion), requiring a higher local infection frequency to sustain establishment of infection and thus generating a broader, taller threshold profile.

\begin{figure}[ht!]
\centering
\includegraphics[width=0.5\textwidth]{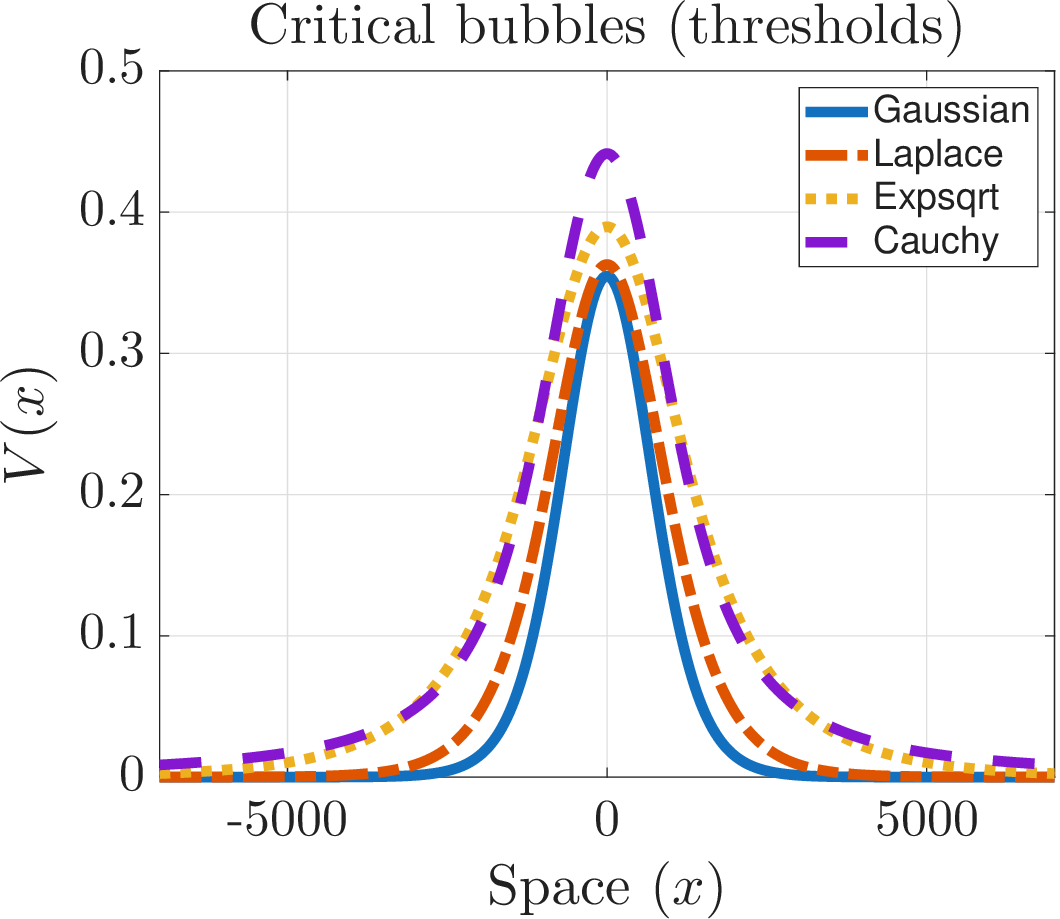} 
\caption{Critical bubbles for Gaussian, Laplace, Exponential square-root, and Cauchy kernels. Solution profiles are generated from step initial conditions set at the respective threshold heights and the critical bubbles are obtained at $t = 100$.}\label{fig:bubbles}
\end{figure}

\subsection{Results for 2-D Spatial Domain}
To investigate the spatial propagation of \W~ in a more realistic 2-D spatial domain with radial symmetry, we consider the WIDE in polar coordinates. Let $V_t(r)$ denote the infection frequency at time $t$ and radial distance $r$ from the center of release. Under the assumption of radial symmetry and isotropic dispersal, the WIDE becomes
$$
V_{t+1}(r) = \int_0^\infty K(r- \rho) f(V_t(\rho)) 2\pi \rho \, d\rho,
$$
where $K(r)$ is the radial kernel and $f(\cdot)$ is the nonlinear growth function defined in \cref{eq:growth}. The 2-D radial kernel distributions and MAD metrics are summarized in \cref{tab:kernels_PDFs} and \cref{tab:kernels}. \Cref{fig:result_2D} illustrates the numerical evolution of the infection profile under a radially symmetric initial condition \begin{equation*}
V_0(r) = 
\begin{cases}
v_0, & \text{if } r < L \\
0, & \text{otherwise}
\end{cases}\;,
\end{equation*}
where $v_0=1$, and $L=1000$. 
The solutions evolve into outward-propagating circular traveling waves after an initial transient period. Similar to the 1-D case, the Cauchy kernel produces the fastest wavefront expansion, followed by the Exponential square-root, Laplace, and Gaussian kernels. 

\begin{figure}[ht!]
\centering
\includegraphics[width=0.46\textwidth]{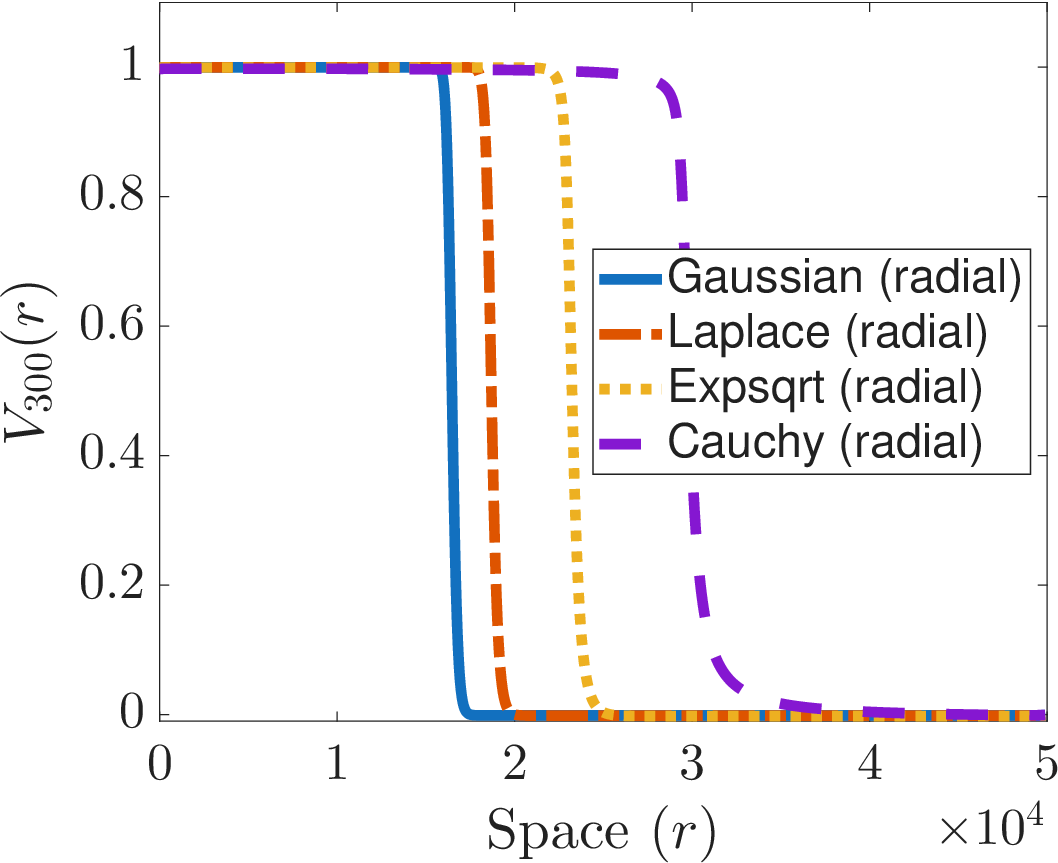}\quad  \includegraphics[width=0.48\textwidth]{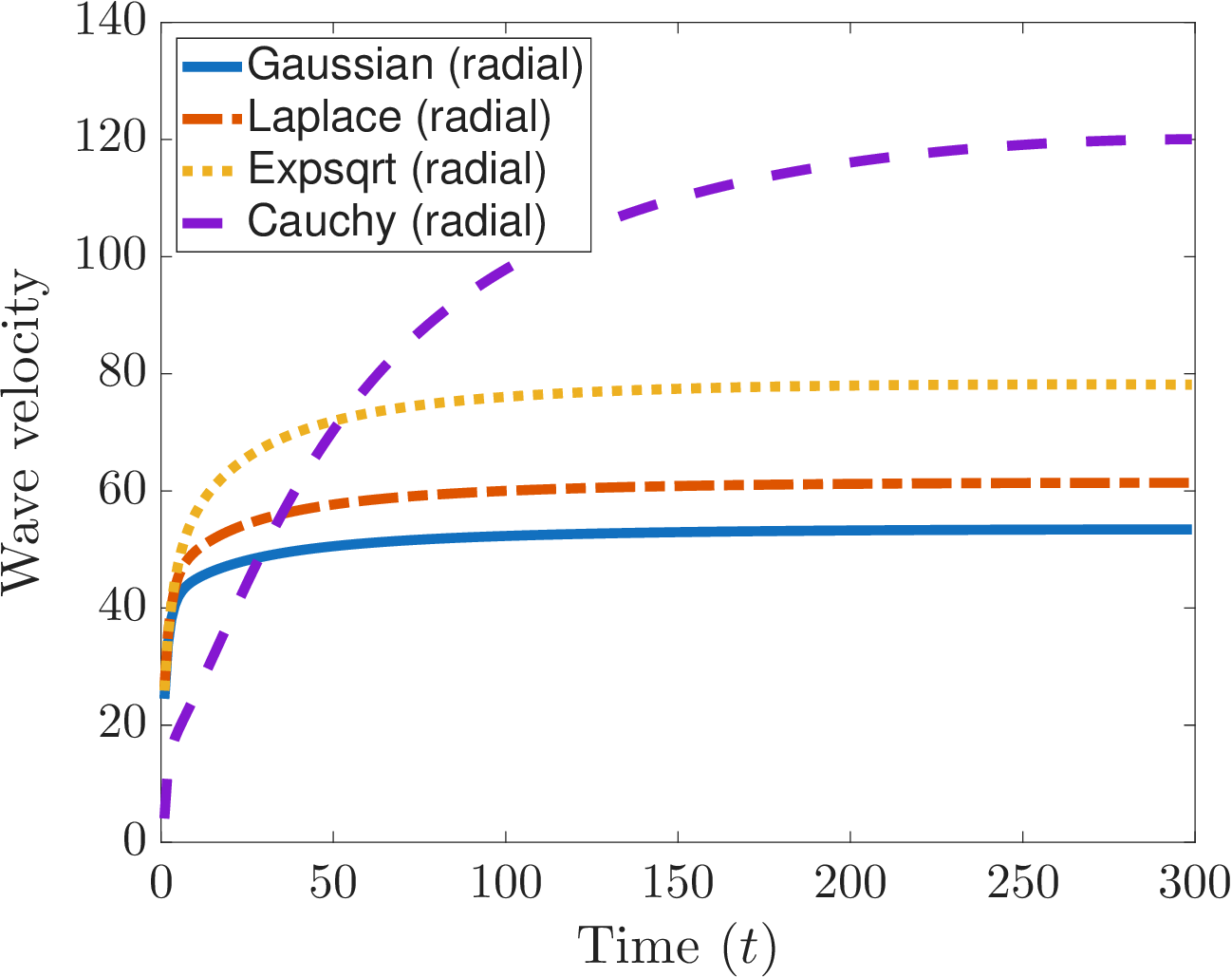}\\
\includegraphics[width=\textwidth]{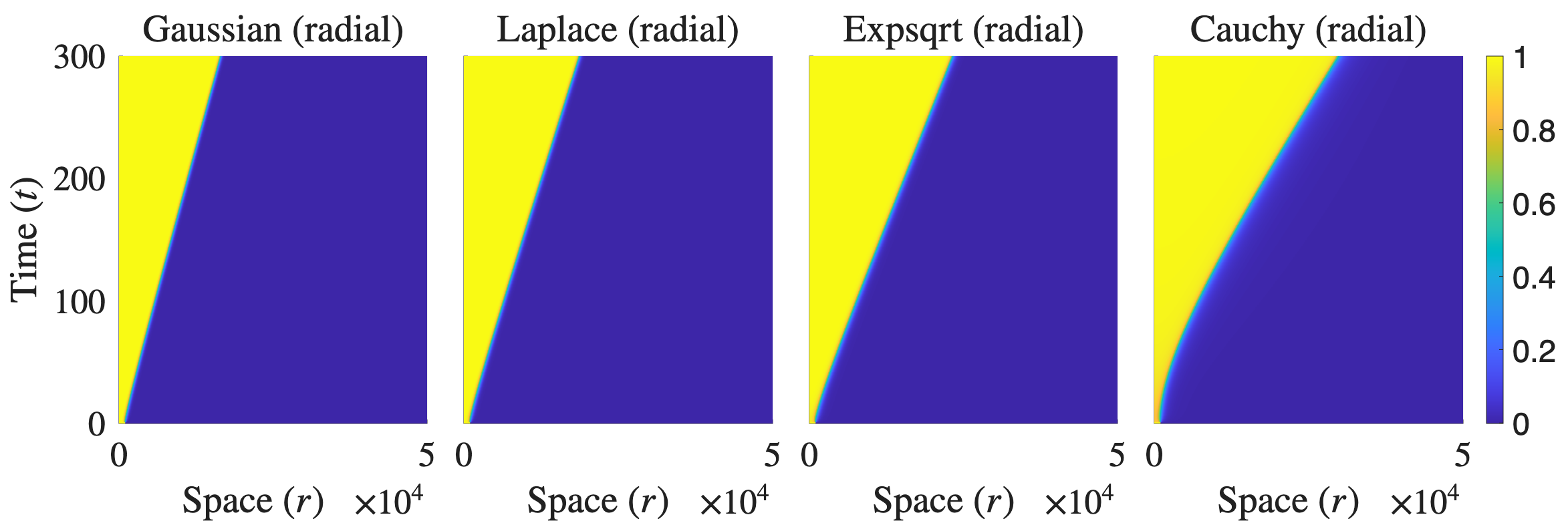}\caption{Infection dynamics under radial symmetry for different radial dispersal kernels in 2-D. Top row: Solution profile at $t=300$, and time series of corresponding wave velocities. Bottom row: Space-time heatmaps of the solutions.}
\label{fig:result_2D}
\end{figure}

\Cref{fig:results_2D_bubble} left panel depicts the critical bubble in the 2-D radially symmetric setting. Consistent with the 1-D results (\cref{fig:bubbles}), the radial Cauchy kernel yields the tallest threshold profile among the dispersal kernels considered (\cref{fig:results_2D_bubble} middle). Furthermore, the height of the critical bubble in 2-D is significantly greater than that in 1-D, assuming the MAD is controlled (\cref{fig:results_2D_bubble} right). This increase reflects the greater dispersion inherent in 2-D space domain; thus, a higher initial infection level is needed to counteract the spatial dilution and sustain the infection for invasion success.

\begin{figure}[ht!]
\centering
\includegraphics[width=0.34\textwidth]{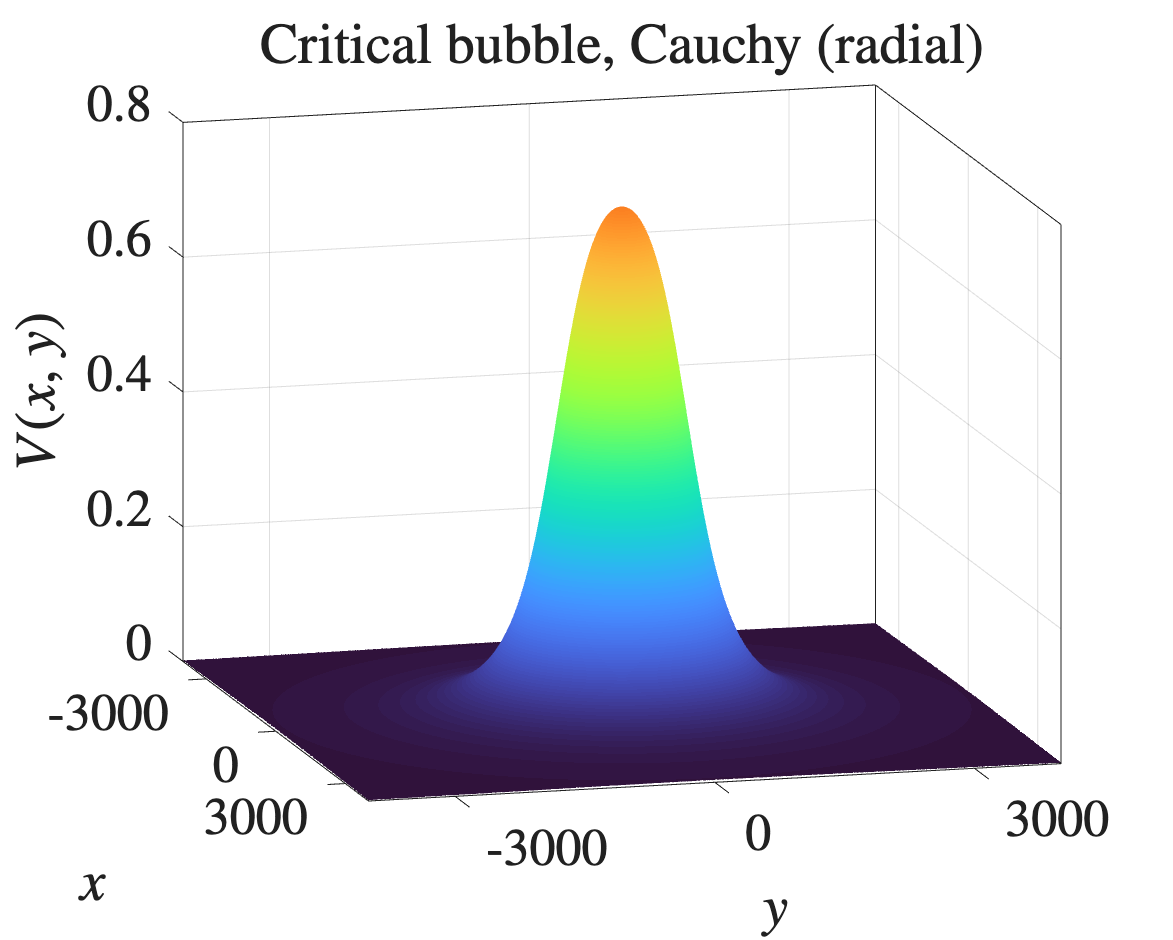}\hfill \includegraphics[width=0.32\textwidth]{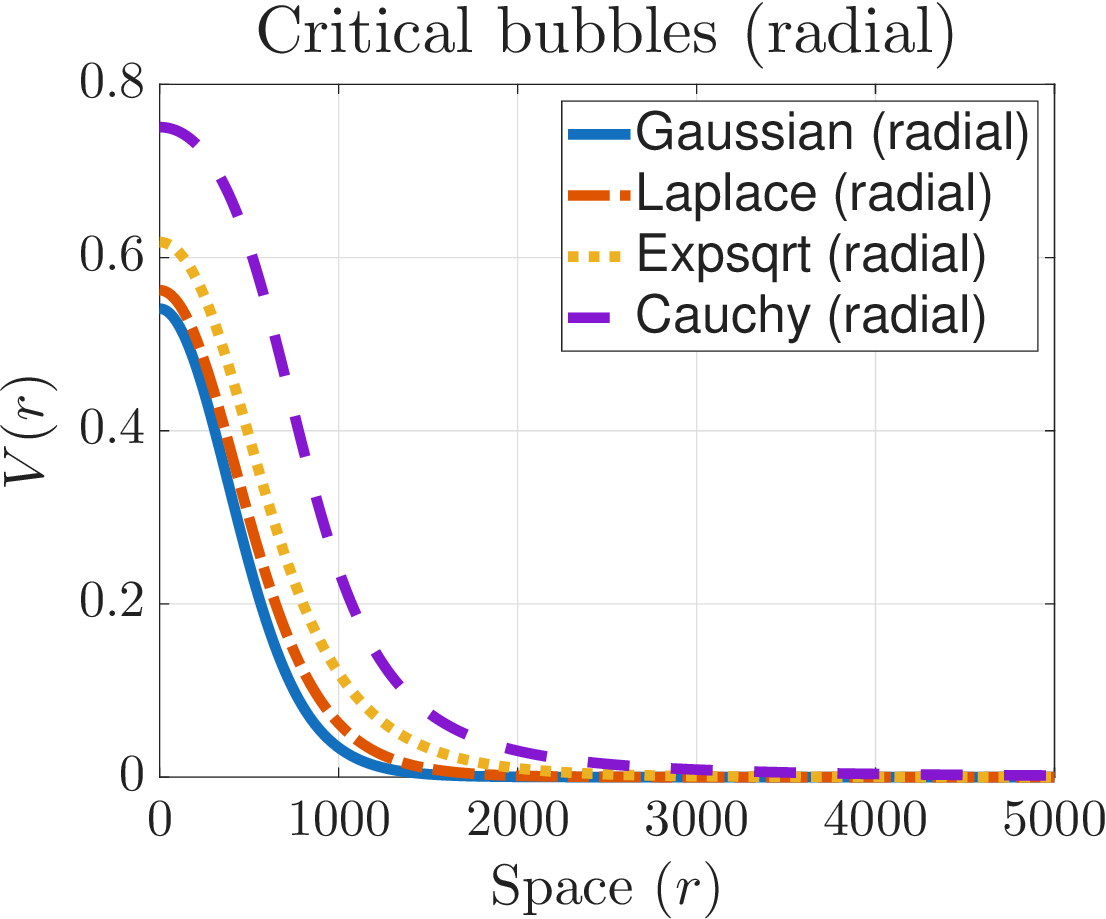}\hfill \includegraphics[width=0.32\textwidth]{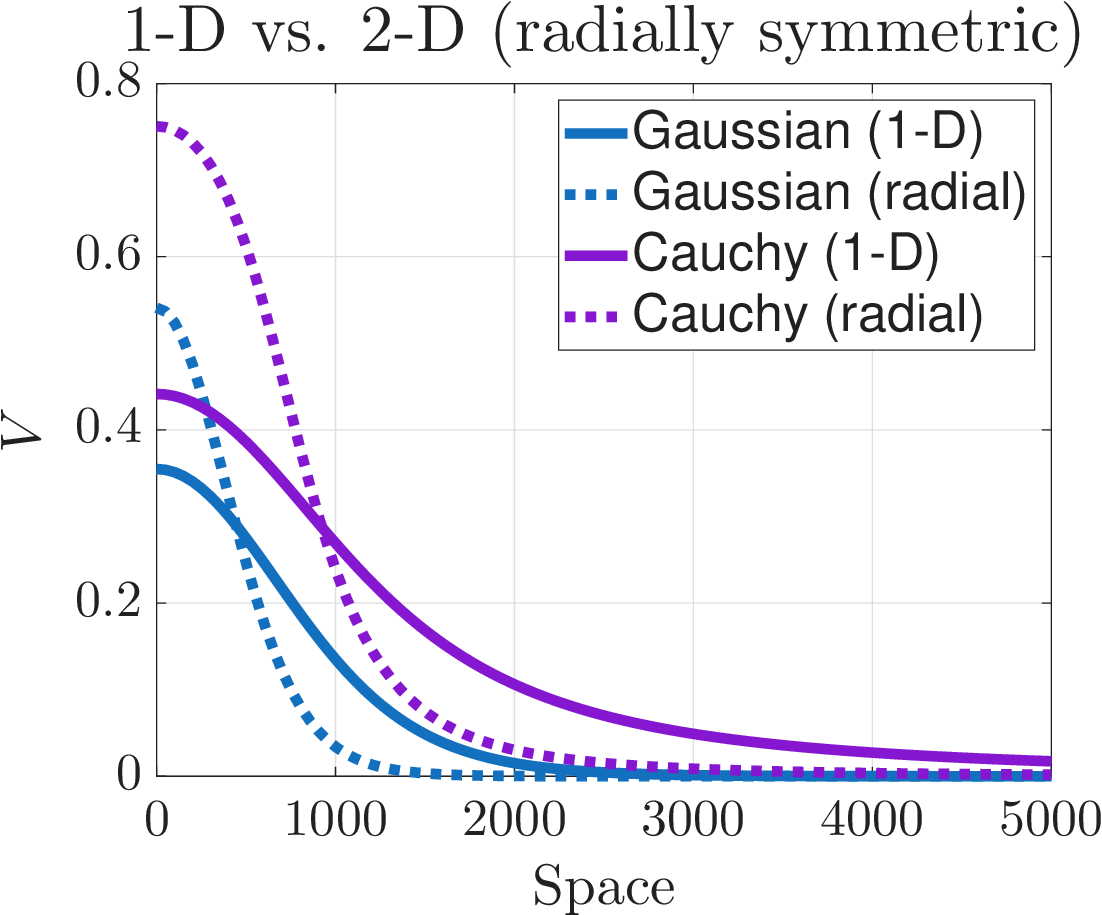}
\caption{Critical bubbles in 2-D radially symmetric setting. Left: 3-D view of the critical bubble obtained using the radially symmetric Cauchy kernel. Middle: Radial cross-sections of critical bubbles for different dispersal kernels, showing that the Cauchy kernel yields the tallest threshold. Right: Comparison between 1-D and 2-D (radially symmetric) critical bubbles under the same MAD for Gaussian and Cauchy kernels, indicating a significantly higher threshold in 2-D due to increased dispersion. }\label{fig:results_2D_bubble}
\end{figure}

\section{Discussion}\label{sec:discuss}

We have developed and analyzed an IDE framework for the spatial spread of \W~ in mosquitoes. The model couples a biologically grounded, bistable growth function with a spectrum of dispersal kernels, providing a flexible mathematical formulation that links biological mechanisms with spatial dynamics. 
Our analysis established the existence (and uniqueness up to translation) of the monotone traveling wave solutions under biologically relevant assumptions, and we derived the corresponding wave speed. We further characterized the threshold for local establishment under compact releases, represented by the critical bubble. Through a combination of analytical and numerical approaches, we demonstrated how different dispersal kernels shape the dynamics of \W~  spread. 

By systematically comparing kernels with distinct analytical properties—Gaussian, Laplace, Exponential square-root, and Cauchy—we quantified how tail heaviness influences the invasion dynamics. With the MAD controlled across the kernels, the fat-tailed Cauchy kernel produced the fastest traveling wave speed, whereas the Gaussian kernel yielded the slowest propagation. Our results also showed that the wave speed is linearly proportional to the MAD of the dispersal kernels. Moreover, the Cauchy kernel leads to the highest threshold for \W~ local establishment, implying a trade-off between the initial local reproduction and spatial spread.

The IDE framework developed here provides a theoretical baseline for understanding the spatial dynamics of \W~ infection and offers insights for field implementation. In particular, our work highlights three central factors that govern the success of \W ~ invasion (\cref{tab:wolbachia_summary}). (1) The Allee effect, captured by the critical bubble, imposes a threshold condition for local establishment with compact releases.  This underscores the importance of sufficiently high release densities to exceed the invasion threshold. (2) The dispersal kernel describes the mosquito movement patterns, which are influenced by landscape structure, and affects both the establishment threshold and the spatial spread pattern once established. The Gaussian (thin-tailed) kernel is representative of urban environments where barriers constrain mosquito flights, whereas the Cauchy (fat-tailed) kernel is more suitable for modeling open landscapes where long-range dispersal occurs. This distinction underscores the importance of context-specific modeling when applying spatial models to inform release strategies. (3) Finally, the wave speed estimate characterizes the spatial invasion rate once the local infection is established and provides a benchmark for planning release intervals, monitoring efforts, and potentially coordinating with other vector control programs. Together, these findings offer theoretical guidance for designing and optimizing \textit{Wolbachia} release programs under heterogeneous ecological conditions.

\begin{table}[H]
\centering
\begin{tabular}{|p{3.5cm}| >{\raggedright\arraybackslash}p{5.5cm}| >{\raggedright\arraybackslash}p{6cm}|}
\hline
\textbf{Key Factor} & \textbf{Implication for Dynamics} & \textbf{Field Relevance} \\
\hline
\textbf{Allee effect}\newline (Critical bubble) & \W~ cannot establish below a threshold density. & High-density initial releases are needed to surpass the invasion threshold. \\
\hline
\textbf{Dispersal kernel} & Dispersal shapes the threshold for \W~ establishment and spatial spread pattern. & Reflects the landscape heterogeneity: Gaussian fits urban settings, while fat-tailed kernels suit open landscapes. \\
\hline
\textbf{Wave speed} & Determine the rate of spatial spread once \W~ is established locally. &  Faster waves lead to quicker coverage, but with less control over spread. \\
\hline
\end{tabular}
\caption{Key model feature and properties of \W~ traveling wave dynamics, their role in spreading dynamics, and practical relevance for field implementation.}
\label{tab:wolbachia_summary}
\end{table}

While the present framework captures the essential mechanisms of the spatial dynamics, several extensions merit future investigation. The current formulation reduces mosquito dynamics to a single state variable representing \W~frequency; incorporating more detailed mosquito ecology with multiple life stages or density-dependent population dynamics could improve biological realism. Additional extensions could address heterogeneous environments, such as spatially varying parameters or fragmented landscapes, using patch-based or network-based domains to better reflect ecological complexity. Moreover, incorporating stochasticity in dispersal and infection could further capture real-world variability, while coupling with epidemiological models would enable explicit quantification of the effects of \W~ spread on disease transmission. These extensions would strengthen the connection between theoretical predictions and applied vector control efforts, reinforcing the role of spatially explicit mathematical models in the design of sustainable interventions.

\section*{Acknowledgments}
ZQ was partially supported by the National Science Foundation award DMS-2316242. The funder had no role in study design, data collection and analysis, decision to publish, or manuscript preparation.

\section*{Disclosure statement}
No potential conflict of interest was reported by the authors.

\bibliography{Spatial_Wolbachia}

\newpage 

\appendix
\section*{Appendix}
\renewcommand{\thesection}{\Alph{section}}
\crefname{section}{Appendix}{Appendices} 
\setcounter{figure}{0} 
\setcounter{equation}{0} 
\setcounter{table}{0}

\renewcommand\thefigure{\thesection.\arabic{figure}}   
\renewcommand\theequation{\thesection.\arabic{equation}}  
\renewcommand\thetable{\thesection.\arabic{table}}  

\section{Proofs for Existence of Traveling wave \label{sec:proof_lemma}}
\subsection{Proof of Lemma \ref{lem:mono}}

\begin{proof}
\noindent{(i)}  
Since $U \ge V$ and $f$ is nondecreasing, we have $f(U) \ge f(V)$.  
Because $K \ge 0$, it follows that
\[
T_c(U) - T_c(V)
= \int_{\mathbb{R}} K(z + c - y)\, \big(f(U(y)) - f(V(y))\big)\, dy
\ge 0.
\]
Hence $T_c(U) \ge T_c(V)$.

\smallskip
\noindent{(ii)}  
Let $U \in A$.  
Since $f:[0,1]\to[0,1]$ is nondecreasing, the composition $f(U(y))$ is nonincreasing and satisfies $0 \le f(U(y)) \le 1$ for all $y \in \mathbb{R}$.  
Using $K \ge 0$ and $\int_{\mathbb{R}} K = 1$, we obtain
\[
0
\le \int_{\mathbb{R}} K(z + c - y)\, f(U(y))\, dy
\le \int_{\mathbb{R}} K(z + c - y)\, dy
= 1.
\]
Thus $T_c(U)(z) \in [0,1]$.  

\noindent To verify monotonicity in $z$, let $z_1 \ge z_2$.  
By the change of variables $\xi = z_1 + c - y$, we can write
\[
T_c(U)(z_j)
= \int_{\mathbb{R}} K(\xi)\, f\big(U(z_j + c - \xi)\big)\, d\xi,
\quad j = 1,2.
\]
Since $U$ is nonincreasing and $z_1 \ge z_2$, we have
$U(z_1 + c - \xi) \le U(z_2 + c - \xi)$ for all $\xi$.  
Because $f$ is nondecreasing, this implies
$f(U(z_1 + c - \xi)) \le f(U(z_2 + c - \xi))$.  
Multiplying by $K(\xi) \ge 0$ and integrating gives
\[
T_c(U)(z_1) - T_c(U)(z_2)
= \int_{\mathbb{R}} K(\xi)
\big[f(U(z_1 + c - \xi)) - f(U(z_2 + c - \xi))\big]\, d\xi
\le 0.
\]
Hence $T_c(U)$ is nonincreasing in $z$, and therefore $T_c(U) \in A$.

\smallskip
\noindent{(iii)}  
Translation identity. We have that 
\[
T_c(U)(z) = T_0(U)(z + c).
\]
Since part (ii) shows that $T_0(U)$ is nonincreasing in its argument, for $c_1 \le c_2$ we have
\[
T_{c_1}(U)(z)
= T_0(U)(z + c_1)
\ge T_0(U)(z + c_2)
= T_{c_2}(U)(z),
\]
which establishes the claim.
\end{proof}

\subsection{Proof of Lemma \ref{lem:tails}}
\begin{proof}
Due to assumptions (K) and (A), we have 
$|K(y)f(W(z-y))|\le\|f\|_{\infty}K(y)$, with $\|f\|_{\infty}<\infty$. Moreover, $\ds\lim_{z\rightarrow\infty}K(y)f(W(z-y))=K(y)f(b)$ for all $y$, because $\lim_{z\rightarrow\infty}W(z-y)= b$ and $f$ is continuous. By the Dominated Convergence Theorem, 
$$\begin{aligned}
\lim_{z\rightarrow\infty}T_0(W)(z)&=\lim_{z\rightarrow\infty}\int K(z-y)f(W(y))dy=\lim_{z\rightarrow\infty}\int K(y)f(W(z-y))dy\\
&=\int K(y)f(b)dy=f(b)\int K(y)dy=f(b).
\end{aligned}$$
Similarly, when taking the limit $z\to -\infty$, $\ds\lim_{z\to-\infty}T_0(W)(z)=f(a)$.
\end{proof}

\subsection{Proof of Lemma \ref{lem:contact}}
\begin{proof}
(i) Since $U(z)$ is nonincreasing and 
$\lim_{z\to -\infty} U(z) = a > \theta$, 
there exists $z'$ such that $U(z) \ge \theta$ for all $z \le z'$. 
By Lemma~\ref{lem:tails}, we have 
\[
\lim_{z\to -\infty} T_0(U)(z) = f(a) \in (a,1),
\qquad 
\lim_{z\to \infty} T_0(U)(z) = f(0) = 0.
\]
Since both $U(z)$ and $f(U(z))$ are bounded and nonincreasing, and $K \in L^1$, the function $T_0(U)$—being a convolution of $K$ with $f(U)$—satisfies
\[
\sup_{x\in\mathbb{R}}
\big|T_0(U)(x+h) - T_0(U)(x)\big|
\le 
\|K(\cdot+h) - K\|_{L^1}\, \|f(U)\|_{\infty}
\xrightarrow[h\to 0]{} 0.
\]
Therefore, $T_0(U)(z)$ is uniformly continuous and nonincreasing in $z$. 
Hence there exists $z''$ such that $T_0(U)(z'') = \theta/2$.

\noindent Let $c' = z'' - z'$. Then
\[
T_{c'}(U)(z') 
= T_0(U)(z' + c') 
= T_0(U)(z'') 
= \theta/2 
< \theta 
\le U(z'),
\]
so $c' \notin C$. 
For any $c \ge c'$, Lemma~\ref{lem:mono} implies 
$T_c(U)(z') \le T_{c'}(U)(z') < U(z')$, 
and thus $c \notin C$. 
Hence $c'$ is an upper bound of $C$, and $c^* = \sup C$ is finite.

\smallskip
\noindent
We now prove the existence of $z^*$. 
Assume, to the contrary, that $T_{c^*}(U)(z) > U(z)$ for all $z$. 
Then 
\[
D := \inf_{z\in\mathbb{R}} \big(T_{c^*}(U)(z) - U(z)\big) > 0.
\]
Since $T_{c^*}(U)$ is uniformly continuous and nonincreasing in $z$, there exists $\Delta c > 0$ such that 
\[
0 \le T_{c^*}(U)(z) - T_{c^*}(U)(z+\Delta c) \le D/2,
\qquad \forall z \in \mathbb{R}.
\]
It follows that 
\[
T_{c^*+\Delta c}(U)(z) 
= T_{c^*}(U)(z+\Delta c) 
\ge T_{c^*}(U)(z) - D/2 
> U(z),
\]
which implies $c^* + \Delta c \in C$ with $c^* + \Delta c > c^*$—a contradiction to the definition of $c^* = \sup C$. 
Therefore, there exists $z^*$ such that 
$T_{c^*}(U)(z^*) = U(z^*)$.\\

\noindent (ii) Since $U(z)$ is nonincreasing and 
$\lim_{z\to \infty} U(z) = a < \theta$, 
there exists $z'$ such that $U(z) \le \theta$ for all $z \ge z'$. 
By Lemma~\ref{lem:tails}, 
\[
\lim_{z\to -\infty} T_0(U)(z) = f(1) = 1,
\qquad 
\lim_{z\to \infty} T_0(U)(z) = f(a) \in (0,a).
\]
Since $U(z)$ and $f(U(z)) \in A$ are bounded and nonincreasing, and $K \in L^1$, 
the function $T_0(U)$—as the convolution of $K$ with $f(U)$—satisfies
\[
\sup_{x\in\mathbb{R}}
\big|T_0(U)(x+h) - T_0(U)(x)\big|
\le 
\|K(\cdot+h) - K\|_{L^1}\, \|f(U)\|_{\infty}
\xrightarrow[h\to 0]{} 0.
\]
Therefore, $T_0(U)(z)$ is uniformly continuous and nonincreasing in $z$. 
Hence there exists $z''$ such that $T_0(U)(z'') = (1+\theta)/2$.

\noindent Let $c' = z'' - z'$. Then
\[
T_{c'}(U)(z') 
= T_0(U)(z' + c') 
= T_0(U)(z'') 
= (1+\theta)/2 
> \theta 
\ge U(z'),
\]
so $c' \notin C$. 
For any $c \le c'$, Lemma~\ref{lem:mono} implies 
$T_c(U)(z') \ge T_{c'}(U)(z') > U(z')$, 
and thus $c \notin C$. 
Hence $c'$ is a lower bound of $C$, and $c^* = \inf C$ is finite.

\smallskip
\noindent
We now prove the existence of $z^*$. 
Assume, to the contrary, that $T_{c^*}(U)(z) < U(z)$ for all $z$. 
Then 
\[
-D := \sup_{z\in\mathbb{R}} \big(T_{c^*}(U)(z) - U(z)\big) < 0.
\]
Since $T_{c^*}(U)$ is uniformly continuous and nonincreasing in $z$, there exists $\Delta c > 0$ such that 
\[
0 \le T_{c^*}(U)(z-\Delta c) - T_{c^*}(U)(z) \le D/2,
\qquad \forall z \in \mathbb{R}.
\]
It follows that 
\[
T_{c^*-\Delta c}(U)(z) 
= T_{c^*}(U)(z-\Delta c) 
\le T_{c^*}(U)(z) + D/2 
< U(z),
\]
which implies $c^* - \Delta c \in C$ with $c^* - \Delta c < c^*$—a contradiction to the definition of $c^* = \inf C$. 
Therefore, there exists $z^*$ such that 
$T_{c^*}(U)(z^*) = U(z^*)$.\\

\noindent (iii).  From the assumptions, we obtain
\[
1 = \lim_{z\to -\infty} U(z) > \lim_{z\to -\infty} V(z) > \theta,
\qquad
\theta > \lim_{z\to \infty} U(z) > \lim_{z\to \infty} V(z) = 0.
\]
Denote $V^L := \lim_{z\to -\infty} V(z)$ and $U^R := \lim_{z\to \infty} U(z)$. 
Since $U, V \in A$ are nonincreasing, we have $U \ge U^R$ and $V \le V^L$. 

\noindent Hence, there exists $L_1 \ge 0$ such that $U(z) > V^L$ for all $z \le -L_1$, 
and there exists $L_2 \ge 0$ such that $V(z) < U^R$ for all $z \ge L_2$. 
Then, for $z \le L_2$,
\[
U(z - L_1 - L_2) \ge U(-L_1) > V^L > V(z),
\]
and for $z > L_2$,
\[
U(z - L_1 - L_2) > U^R > V(z).
\]
Thus $U(z - L_1 - L_2) \ge V(z)$ for all $z$, implying $-L_1 - L_2 \in C$. 
Hence $C$ is nonempty.

Next, there exists $L_3$ such that $U(L_3) = (\theta + U^R)/2$, 
and there exists $L_4$ such that $V(L_4) = (\theta + V^L)/2$. 
Since $U(L_3) < V(L_4)$, $U(z + L_3 - L_4) \ge V(z)$ does not hold when $z=L_4$. 
Therefore $L_3 - L_4 \notin C$. 
By Lemma~\ref{lem:mono}, for any $c \ge L_3 - L_4$ we have $c \notin C$, 
so $L_3 - L_4$ is an upper bound of $C$. 
Thus $C$ is bounded above, and we can define $c^* = \sup C$. 
By continuity of the inequality in $c$, we have $U(z + c^*) \ge V(z)$ for all $z$.

\smallskip
\noindent
We now prove the existence of $z^*$ by contradiction. 
Assume that $U(z + c^*) > V(z)$ for all $z$. 
Then
\[
D := \inf_{z\in\mathbb{R}} \big( U(z + c^*) - V(z) \big) > 0.
\]
Since $U(z + c^*)$ is uniformly continuous and nonincreasing in $z$, there exists $\Delta c > 0$ such that
\[
0 \le U(z + c^*) - U(z + c^* + \Delta c) \le D/2,
\qquad \forall z \in \mathbb{R}.
\]
Hence
\[
U(z + c^* + \Delta c)
\ge U(z + c^*) - D/2
> V(z),
\]
which implies $c^* + \Delta c \in C$ with $c^* + \Delta c > c^*$, 
contradicting the definition of $c^* = \sup C$. 
Therefore, there exists $z^*$ such that 
$U(z^* + c^*) = V(z^*)$.

\end{proof}

\section{Parametrization: dispersal kernels}
The probability distribution functions (PDFs) for the dispersal kernels in 1-D and 2-D spatial domains are given below. 
\begin{table}[ht!]
\centering
\begin{tabular}{p{2.2cm}lll}
\toprule
kernels & $K(x)$ (1-D) & $K(r)$ (2-D) & Parameter\\
\midrule
Gaussian & $\ds \frac{1}{\sqrt{2\pi}\sigma} e^{-\frac{x^2}{2\sigma^2}}$ & $\ds \frac{1}{2\pi\sigma^2} e^{-\frac{r^2}{2\sigma^2}}\quad $& $\sigma>0$\\
\midrule 
Laplace & $\ds \frac{1}{2b} e^{-|x|/b}$  &  $\ds \frac{1}{2\pi b^2} e^{-r/b}$ & $b>0$ \\
\midrule 
Exp. Sqrt & $\ds \frac{\alpha^2}{4} e^{-\alpha \sqrt{|x|}} $ &$\ds \frac{\alpha^4}{24\pi} e^{-\alpha\sqrt{r}}$ & $\alpha>0 $ \\
\midrule 
% Student's $t$ ($\nu=2$) & $\ds \frac{\Gamma\!\left(\frac{\nu+1}{2}\right)}{\lambda\sqrt{\nu\pi}\,\Gamma\!\left(\frac{\nu}{2}\right)}\left(1 + \frac{x^2}{\lambda^2\nu}\right)^{-\frac{\nu+1}{2}}$ & $\ds \frac{\Gamma\!\left( \frac{\nu+2}{2}\right)}{\pi\lambda^2 \nu\,\Gamma\!\left(\frac{\nu}{2}\right)}\left(1 + \frac{r^2}{\lambda^2\nu}\right)^{-\frac{\nu+2}{2}}$ & $\lambda>0,~\nu>1$\\
% \midrule
Cauchy & $\ds \frac{\beta}{\pi}\frac{1}{\beta^2+x^2}$ & $\ds  \frac{\beta}{2\pi }\left(\frac{1}{\beta^2+r^2}\right)^{3/2}$ & $\beta>0$ \\
\bottomrule
\end{tabular}
\caption{Probability distribution functions for the dispersal kernels in 1-D and 2-D radially symmetric setting. Parameters are chosen to match the median absolute deviation to the mosquito flight data, \cref{tab:kernels}.}
\label{tab:kernels_PDFs}
\end{table}

\section{Numerical Methods: extension to 2-D radially symmetric case}\label{sec:2Dnum}

Below, we provide the numerical methods in the 2-D case. For linear kernels in general, we write
$$V_{t+1}(x,y)=\iint K(x-x',y-y')f(V_t(x',y'))\,dx'dy'\;.$$ Due to the relevance of field releases, we specifically consider radially symmetric kernels,
$$V_{t+1}(x,y)=\iint K\left(\sqrt{(x-x')^2+(y-y')^2}\right)f(V_t(x',y'))\,dx'dy'\;.$$
We use the Hankel transform to treat the convolution of radially symmetric kernels
$$V_{t+1}(x,y)=(K*f)(x,y)=\mathcal{H}_0^{-1}(\mathcal{H}_0[K](k)\times\mathcal{H}_0[f](k))\;.$$ The details are given in the following sections. 

\subsection{Definition of Hankel Transform}
The \textbf{Hankel transform of order 0} (which is the most common one used) is defined for a radial function $ f(r) $ by:
$$
\mathcal{H}_0[f](k) = \int_0^\infty f(r) \, J_0(kr) \, r \, dr\;,
$$
where:
\begin{itemize}
\item $ r $ is the radial coordinate (distance from the origin),
\item $ k $ is the radial frequency (analogous to wavenumber),
\item $ J_0(kr) $ is the \textbf{Bessel function of the first kind of order 0},
\item the extra factor $ r $ comes from the area element in polar coordinates ($ r\, dr\, d\theta $).
\end{itemize}

\subsection{Relationship with Fourier transform}
Recall the \textbf{2D Fourier transform} of $ f(x,y) $ is
$$
\mathcal{F}[f](k_x, k_y) = \iint_{\mathbb{R}^2} f(x,y) e^{-2\pi i (xk_x + yk_y)} \, dx \, dy\;.
$$
Suppose now that $ f(x,y) $ depends \textbf{only on} $ r = \sqrt{x^2 + y^2} $. So we can denote: 
$$x = r \cos\theta,\quad y = r \sin\theta,$$
$$k = \sqrt{k_x^2 + k_y^2},\quad k_x = k \cos\phi,\quad k_y = k \sin\phi,$$
which gives:
$$
xk_x + yk_y = rk \cos(\theta - \phi).
$$
Thus, the Fourier transform becomes:
$$
\mathcal{F}[f](k, \phi) = \int_0^\infty \int_0^{2\pi} f(r) e^{-2\pi i r k \cos(\theta-\phi)} r \, d\theta \, dr\;.
$$

\noindent The inner $\theta$-integral:
$$\begin{aligned}
I(k,r) =& \int_0^{2\pi} e^{-2\pi i r k \cos(\theta-\phi)} \, d\theta\\
=& \int_0^{2\pi} e^{-2\pi i r k \cos(\theta)} \, d\theta\\
=& 2\pi J_0(2\pi r k)\;.
\end{aligned}
$$
where $ J_0 $ is the \textbf{Bessel function of the first kind of order 0}.

\noindent Thus, the 2D Fourier transform becomes:
$$
\mathcal{F}[f](k) = 2\pi \int_0^\infty f(r) J_0(2\pi r k) r \, dr\;.
$$
which is the \textbf{Hankel transform} of order 0, up to a scaling factor involving $2\pi$.

\subsection{Numerical algorithm for Radial Convolution using Hankel Transform}

\paragraph{Step 1: Sample $ r $-space}
\begin{itemize}
\item Choose a grid of $ r $-values:$r_0, r_1, \dots, r_{N-1}$
uniformly spaced by $ \Delta r $.
\item Make sure: $r_0 = 0$, $r_{N-1}$ is large enough so that $f(r)$, $K(r) \approx 0$ at boundary.
\end{itemize}

\paragraph{Step 2: Sample $ k $-space}
\begin{itemize}
\item Choose $ k $-values: $k_0, k_1, \dots, k_{M-1}$ typically covering frequencies related to the $ r $-space resolution.
\item Usually $ N = M $ for simplicity.
\end{itemize}

\paragraph{Step 3: Forward Hankel Transforms}
\begin{itemize}
\item Compute the Hankel transforms of $ f $ and $ K $:
$$
\mathcal{H}_0[f](k_m) = \sum_{n=0}^{N-1} f(r_n) J_0(k_m r_n) r_n \Delta r,
\quad
\mathcal{H}_0[K](k_m) = \sum_{n=0}^{N-1} K(r_n) J_0(k_m r_n) r_n \Delta r
$$
for each $ k_m $.
\end{itemize}

\paragraph{Step 4: Multiply in $ k $-space}
\begin{itemize}
\item Multiply the transforms pointwise:
$$
\mathcal{H}_0[K*f](k_m) = \mathcal{H}_0[K](k_m) \times \mathcal{H}_0[f](k_m).
$$
\end{itemize}

\paragraph{Step 5: Inverse Hankel transform}
\begin{itemize}
\item Recover the convolved function $ (K*f)(r) $ by inverse Hankel transform:
$$
(K*f)(r_n) = \sum_{m=0}^{M-1} \mathcal{H}_0[K*f](k_m) J_0(k_m r_n) k_m \Delta k
$$
for each $ r_n $.
\end{itemize}
\end{document}